\documentclass[12pt]{article}

\usepackage{amsmath, amsthm, amssymb, latexsym}
\usepackage{mathtools}
\usepackage{algorithm}
\usepackage{algpseudocode}
\usepackage{multirow,enumitem,url,booktabs,xcolor,bbm,soul,natbib,caption,subcaption}
\usepackage{float} 
\usepackage{bm}
\usepackage{graphicx} 
\usepackage{appendix}
\RequirePackage[colorlinks,citecolor=blue,urlcolor=blue]{hyperref}
\usepackage{bookmark}
\algnewcommand\Input{\item[\textbf{Input:}]} 
\floatstyle{ruled}
\newfloat{procedure}{tbp}{lop}
\floatname{procedure}{Procedure}

\newtheorem{theorem}{Theorem}

\newtheorem{definition}{Definition}

\newtheorem{proposition}{Proposition}
\allowdisplaybreaks
\def\Bernoulli {\text{Bernoulli}}
\def\E {\text{E}}
\def\I {\text{I}}
\def\vech {\text{vech}}

\def\vec{\text{\rm vec}}

\def\Cov{\text{\rm Cov}}
\def\Cor{\text{\rm Cor}}

\def\KL{\text{\rm KL}}

\def\E{\text{\mathbb E}}

\def\diag{\text{\rm diag}}
\def\blockdiag{\text{\rm blockdiag}}

\def\vech{\text{\rm vech}}

\newcommand\Poisson{\text{Poisson}}
\def\Q{{\mathcal{Q}}}
\def\L{{\mathcal{L}}}
\def\G{{\mathcal{G}}}
\def\T{{\mathcal{T}}}
\def\H{{\mathcal{H}}}

\def\N{{\text{N}}}
\def\IG{{\text{IG}}}
\def\E{{\text{E}}}
\def\SN{{\text{SN}}}
\def\IG{{\text{IG}}}
\def\NG{{\text{NG}}}

\def\Exp{{\text{Exp}}}

\def\bfone{{\bf 1}}

\def\P{\text{P}}

\graphicspath{{img/}}
\newcommand{\anon}{1}

\usepackage{xcite, xr}

\begin{document}

\def\spacingset#1{\renewcommand{\baselinestretch}%
{#1}\small\normalsize} \spacingset{1}


\if1\anon
{
  \title{\bf Generalized reparametrized variational Bayes \\
  with skew-symmetric normalization}
  \author{Aoxiang Chen (aoxiang.chen@u.nus.edu)\hspace{.2cm}\\
    Department of Statistics and Data Science, National University of Singapore\\
    and \\
    Linda S. L. Tan\thanks{Linda Tan's research is supported by the Ministry of Education, Singapore, under its Academic Research Fund Tier 2 (Award MOE-T2EP20222-0002).} (statsll@nus.edu.sg)\\
    Department of Statistics and Data Science, National University of Singapore}
  \maketitle
} \fi

\if0\anon
{
  \bigskip
  \bigskip
  \bigskip
  \begin{center}
    {\LARGE\bf Generalized reparametrized variational Bayes with skew-symmetric normalization}
  \end{center}
  \medskip
} \fi

\bigskip
\begin{abstract}
Bayesian hierarchical models with high-dimensional latent structure require scalable posterior approximations that preserve key dependencies while remaining computationally tractable. Mean-field variational inference (MFVI) is efficient, but can be unreliable when local variables are strongly correlated or tightly coupled to global variables. We propose KNorm-RVB, a generalized reparametrized variational Bayes framework for latent Gaussian and latent non-Gaussian models with sparse local precision matrices. KNorm-RVB maps the conditional posterior of local variables toward a standard Gaussian via normalization followed by skewness reduction, enabled by a novel $K$-component skew-symmetric density representation. This reparametrization centers the transformed conditional local posterior at an optimized reflection point and decorrelates local and global variables, making MFVI much more effective. Under symmetry conditions, we show that MFVI recovers the local posterior mean and correlation matrix exactly, motivating KNorm-RVB’s normalization and symmetrization of the conditional local posterior before applying MFVI. We combine a Gaussian variational family for reparametrized local variables with a flexible closed skew normal family for the remaining variables. Across generalized linear mixed models, mixed multinomial logit models, spatial autoregressive models, and stochastic volatility models, KNorm-RVB improves posterior approximation accuracy over existing methods.
\end{abstract}

\noindent%
{\it Keywords:} Hierarchical models, Mean-field variational inference, Reparametrization, Skew-symmetric density, Latent non-Gaussian models
\vfill

\newpage
\spacingset{1.8} 

\section{Introduction}
Bayesian hierarchical models are powerful tools for data analysis, combining global variables that capture population-level structure with local latent variables representing subject, area or time specific variation \citep{Gelman2006}. They are widely used in political science \citep{Park2004}, epidemiology \citep{Blangiardo2013} and ecology \citep{Royle2008}, but posterior inference becomes challenging when latent variables are high-dimensional and strongly correlated. Markov chain Monte Carlo (MCMC) methods, including Hamiltonian Monte Carlo \citep[HMC,][]{Hoffman2014}, provide asymptotically exact inference under standard regularity conditions \citep{Robert1999}, yet can mix poorly in such settings. This motivates scalable posterior approximations, especially when models must be fit repeatedly for model selection.

Integrated nested Laplace approximation \citep[INLA,][]{Rue2009, Rue2017} is widely used for latent Gaussian models. By exploiting Gaussian Markov random field sparsity and combining Laplace approximations with numerical integration over global variables, INLA can efficiently approximate posterior marginals, with recent work further improving its accuracy \citep{Chiuchiolo2022, Van2024, Dutta2026}. Variational inference \citep[VI,][]{Blei2017}, by contrast, approximates the joint posterior with a tractable family and can scale to models with many global and local variables, though accuracy depends on how well the variational family matches the true posterior. Full-rank Gaussian VI is flexible but can be unstable or expensive in high dimensions \citep{Ko2024}, whereas mean-field VI (MFVI) scales well but imposes posterior independence among variables and often underestimates uncertainty \citep{Neville2014}. Structured variants, including sparse precision Gaussian VI \citep{Tan2018}, conditionally structured VI \citep{Tan2020B}, amortized VI \citep{Agrawal2021}, hybrid VI \citep{Loaiza2022} and partially factorized VI \citep{Goplerud2025}, aim to better balance accuracy and scalability.

In this work, we improve the accuracy and efficiency of MFVI for hierarchical models through {\em reparametrizations}, which are known to accelerate MCMC by reshaping posterior geometry \citep{Gelfand1995, Gelfand1996}. Hierarchical models are often expressed in a centered parameterization, where local variables depend directly on global variables, or noncentered parameterization, where local variables are a priori independent of the global variables \citep{Betancourt2015}. The better choice is data dependent and hard to predict in advance. Partial noncentering interpolates between these extremes and can outperform both \citep{Papaspiliopoulos2007, Bass2019}. Related approaches include the ancillarity-sufficiency interweaving strategy \citep{Yu2011}, and transport maps that decouple and Gaussianize the target prior to HMC sampling  \citep{Osmundsen2021}.

Reparametrization can similarly improve VI. \citet{Tan2013} develop partial noncentering for generalized linear mixed models (GLMMs), improving both convergence and approximation accuracy. \citet{Tan2021} propose {\em reparametrized variational Bayes} (RVB), a generalization of partial noncentering that applies MFVI after an affine transformation of the local variables to approximately normalize them (zero mean, identity covariance), and weaken posterior dependence on global variables. Marginally augmented variational Bayes \citep{Goplerud2022} goes beyond MFVI via Bayesian post-processing, inducing dependence by sampling expansion parameters introduced through random effects transformations. \cite{Chen2025rotate} perform MFVI in a rotated coordinate system, obtained by principal component analysis, to reduce posterior dependencies among variables.

Here we generalize RVB by addressing two key limitations. First, the existing RVB is tailored to latent Gaussian models with conditionally independent local variables given the global variables, and it does not accommodate conditionally correlated local structure, or latent non-Gaussian models that can capture jumps or spikes. Second, transformed local variables may remain skewed after normalization, making the Gaussian approximation in RVB inadequate. We address both issues by extending RVB to a broader class of hierarchical models, augmenting normalization with skewness reduction, and using non-Gaussian variational families for the global variables to better represent asymmetric posteriors.

Our contributions are fourfold. First, we develop a generalized RVB framework for latent Gaussian models with sparse precision matrices that allow conditional dependence among local variables, and extend it to latent non-Gaussian models by replacing Gaussian driving noise with a heavy-tail alternative \citep{Cabral2024}. Second, we propose KNorm-RVB, which reduces residual skewness after curvature-based normalization of the local variables, using a $K$-component skew-symmetric representation that generalizes the two-component formulation of \cite{Wang2004}. We also derive a novel stochastic representation that allows simulation from the symmetrized density, enabling analytic evaluation of the reparametrized joint density. Third, we use a richer variational family, the closed skew normal (CSN) subclass \citep{Tan2025}, for global and mixing variables, while retaining Gaussian approximation for transformed local variables. Fourth, we establish theory, showing that KNorm-RVB centers the transformed conditional local posterior at an optimized reflection point and decorrelates local and global variables, thereby improving the effectiveness of MFVI. Under symmetry conditions, MFVI also recovers the local posterior mean and correlation matrix exactly, motivating KNorm-RVB’s normalization and symmetrization of the conditional local posterior before applying MFVI, and extending recent work on symmetry guarantees in VI \citep{Margossian2025b, Margossian2025a, Marks2026}.

Section \ref{sec: RVB review} reviews the RVB approach, and Section \ref{sec:g_RVB} extends it to latent Gaussian and latent non-Gaussian models. Section \ref{sec:skewness_correction} generalizes the skew-symmetric density to the $K$-component setting, and shows how this representation improves the Gaussianization of normalized local variables by reducing residual skewness. Section \ref{sec:CSN} describes the variational families and optimization algorithm. Section \ref{sec:theory} studies the finite reflection symmetry induced by KNorm-RVB and its implications for mean recovery and posterior dependence structure. Section \ref{sec:app} reports experimental results and Section \ref{sec:conclusion} concludes with a discussion.

\section{Review of reparametrized variational Bayes}\label{sec: RVB review}
Consider a hierarchical model for observed data $y = (y_1^\top, \dots, y_n^\top)^{\top}$, with global variables $\theta_g$ and local (latent) variables $b = (b_1^\top, \dots, b_n^\top)^{\top} \in \mathbbm{R}^N$, where $b_i$ is associated with $y_i$ for $i=1, \dots, n$. Let $\theta = (\theta_g^\top, b^\top)^\top$ and suppose the joint density factorizes as
\begin{equation} \label{hierarchical model}
p(y,\theta) = p(\theta_g) p(b\mid\theta_g) \prod_{i=1}^n p(y_i\mid b_i,\theta_g),
\end{equation}
where $p(\theta_g)$ is the prior on $\theta_g$, $ p(b\mid\theta_g)$ is the conditional density of $b$ given $\theta_g$ and $p(y_i\mid b_i,\theta_g)$ is the likelihood of observing $y_i$ given $b_i$ and $\theta_g$.

The posterior $p(\theta\mid y) =  p(y \mid \theta) p(\theta)/p(y)$ is often intractable because the marginal likelihood $p(y)=\int p(y \mid \theta) p(\theta) \,d\theta$ has no closed form. VI approximates $p(\theta\mid y)$ with a more tractable density $q_\lambda(\theta)$ from a variational family indexed by $\lambda$, chosen to minimize the Kullback-Leibler divergence (KLD), $\KL \{q_\lambda(\theta) \| p(\theta\mid y)\} = \int q_\lambda(\theta) \log \{q_\lambda(\theta) / p(\theta\mid y)\} d\theta$. Using the identity, $\log p(y) = \L(\lambda) + \KL\{q_\lambda(\theta) \| p(\theta\mid y)\}$, minimizing the KLD is equivalent to maximizing the evidence lower bound (ELBO),
\begin{equation*}
\L(\lambda) = \E_{q_\lambda(\theta)} \{ \log p(y,\theta) - \log q_\lambda(\theta) \},
\end{equation*} 
which avoids the intractable $\log p(y)$, and satisfies $\L(\lambda) \leq \log p(y)$ since KLD $\geq 0$.

For complex hierarchical models, $\L(\lambda)$ is rarely available in closed form, but it can be optimized by stochastic gradient ascent using updates, $\lambda \leftarrow \lambda + \widehat{\nabla}_\lambda \L(\lambda)$, where $\widehat{\nabla}_\lambda \L(\lambda)$ is an unbiased estimator of $\nabla_\lambda \L(\lambda)$. To reduce gradient variance, we use the reparametrization trick \citep{Kingma2015}. Rather than sampling $\theta \sim q_\lambda(\theta)$ directly, we set $\theta = t(\epsilon,\lambda)$ with $\epsilon \sim \pi(\epsilon)$, where $\pi(\epsilon)$ is independent of $\lambda$ and $t(\cdot, \cdot)$ is differentiable in $\lambda$. This yields $\L(\lambda) = \E_{\pi(\epsilon)} \{ \log p \left(y, t(\epsilon,\lambda) \right) - \log q_\lambda (t(\epsilon,\lambda)) \}$. Under standard regularity conditions, $\nabla_\lambda \L(\lambda) = \E_{\pi(\epsilon)} [ \nabla_\lambda \theta \, \{\nabla_\theta\log p(y, \theta) - \nabla_\theta \log q_\lambda (\theta) \} ]$, and an unbiased estimator follows by sampling $\epsilon \sim \pi(\epsilon)$ at each iteration and using automatic differentiation. We use the sticking-the-landing estimator \citep{Roeder2017}, which omits the score function term $\nabla_\lambda q_\lambda(\theta)$ with zero mean, and has desirable gradient variance properties \citep{Kim2024}.

\cite{Tan2021} further assumes that $p(b \mid \theta_g) = \prod_{i=1}^n p(b_i \mid \theta_g)$. Although $p(b \mid \theta_g, y) = \prod_{i=1}^n p(b_i \mid \theta_g, y_i)$, the mean-field variational approximation $q(\theta) = q(\theta_g)\prod_{i=1}^n q(b_i)$ can be overly restrictive as it cannot capture the typically strong posterior dependence between each $b_i$ and $\theta_g$. RVB mitigates this by applying an affine transformation on the local variables to reduce their posterior dependence on $\theta_g$. Concretely, if $p(b_i \mid \theta_g, y_i)$ is well approximated by $\N(\lambda_i, \Lambda_i)$, with $\lambda_i$ and $\Lambda_i$ depending on $\theta_g$, and $\Lambda_i = L_i L_i^\top$ is the Cholesky factorization, then the normalized variable $\tilde{b}_i = L_i^{-1}(b_i - \lambda_i) \approx \N(0, I)$. This suggests that $\tilde{b}_i$ is approximately independent of $\theta_g$ a posteriori. A mean-field variational approximation for the reparametrized model, $q(\tilde{\theta}) = q(\theta_g) \prod_{i=1}^n q(\tilde{b}_i)$, where $\tilde{\theta} = (\theta_g^\top, \tilde{b}^\top)^\top$ and $\tilde{b} = (\tilde{b}_1^\top, \dots, \tilde{b}_n^\top)^\top$, is therefore less restrictive and more accurate.

\cite{Tan2021} apply RVB to GLMMs by constructing a Gaussian approximation to $p(b_i \mid \theta_g, y_i)$, which determines how $\lambda_i$ and $L_i$ depend on $\theta_g$. They consider a second-order Taylor expansion of either (i) the likelihood $p(y_i \mid \eta_i)$ around an estimate $\hat{\eta}_i$ of the natural parameter $\eta_i$, or (ii) $\log p(b_i \mid \theta_g, y_i)$ about an estimate $\hat{b}_i$ of $b_i$. Empirically, the latter is more accurate when the data are weakly informative about the natural parameters. In their experiments, RVB converges rapidly and yields marginal posterior estimates comparable to INLA, while providing an explicit approximation to the full joint posterior. This performance is largely driven by the normalization step, which reparametrizes the local variables to have approximately mean zero with identity covariance a posteriori, simplifying initialization of local variational parameters and promoting fast, stable convergence.

\section{Generalized reparametrized variational Bayes}\label{sec:g_RVB}
Our model class is broader than that of \cite{Tan2021}, as we do not impose conditional independence of the local variables, $p(b \mid \theta_g) = \prod_{i=1}^n p(b_i \mid \theta_g)$. Thus, \eqref{hierarchical model} includes not only GLMMs but also models with structured dependence among the local variables, such as state space models (SSMs) and spatial autoregressive (SAR) models, which fall outside the scope of RVB. We first develop a generalized RVB framework for latent Gaussian models in Section \ref{sec:LGM}, and then extend it to latent non-Gaussian models in Section \ref{sec:lngm}.

\subsection{Latent Gaussian models}\label{sec:LGM}
In latent Gaussian models, we assume $b \sim \N(0, Q^{-1})$. Conditional independence among elements of $b$ are encoded by the precision matrix $Q$, with $Q_{ij} = 0$ implying that the $i$th and $j$th elements of $b$ are conditionally independent given the remaining elements. We assume $Q$ is a sparse matrix depending on $\theta_g$, as is typical in SSMs and SAR models.

As the local variables may be correlated a priori, $p(b \mid \theta_g,y)$ may not factorize as $\prod_{i=1}^n p(b_i \mid \theta_g,y_i)$. A mean-field variational approximation $q(\theta) = q(\theta_g)\prod_{i=1}^n q(b_i)$ is thus highly restrictive, as it ignores posterior dependence both between $\theta_g$ and $b$, and among the $\{b_i\}$ induced by $Q$. We address this limitation by reparametrizing the local variables to reduce these dependencies. In particular, 
\begin{equation}\label{eq:local_posterior}
\begin{aligned}
p(b\mid \theta_g, y) 
\propto \exp \left\{\sum_{i=1}^n \log p(y_i \mid b_i,\theta_g) - \frac{1}{2}b^\top Q b \right\},
\end{aligned}
\end{equation}
which is generally non-Gaussian due to the likelihood terms, despite the Gaussian prior.

To construct a transformation that normalizes $b$, we form a Gaussian approximation to $p(b\mid \theta_g, y)$ that preserves the sparsity of $Q$ and is accurate in regions of high posterior mass. A second-order Taylor expansion of $\log p(y_i \mid b_i, \theta_g)$ about an estimate $\hat{b}_i$ of $b_i$ gives
\begin{equation}\label{eq:second_taylor}
\log p(y_i \mid b_i,\theta_g) \approx \log p(y_i \mid \hat{b}_i,\theta_g) + (b_i-\hat b_i)^\top g_i(\hat b_i) - (b_i-\hat b_i)^\top H_i(\hat{b}_i)(b_i-\hat b_i)/2,
\end{equation}
where $g_i(b_i) = \nabla_{b_i} \log p(y_i  \mid  b_i, \theta_g)$ and $H_i(b_i) = -\nabla_{b_i}^2 \log p(y_i  \mid  b_i, \theta_g)$. We further define  
\begin{equation}\label{eq:g&H}
    g(b)=(g_1(b_1)^\top,\dots,g_n(b_n)^\top)^\top
    \quad \text{and} \quad
    H(b)=\blockdiag\{H_1(b_1),\dots,H_n(b_n)\}.
\end{equation}
Let $\hat b$ be the conditional posterior mode, satisfying $\nabla_b\log p(b \mid \theta_g,y) = g(b) - Q b = 0$, which we compute using Newton’s method. Substituting \eqref{eq:second_taylor} into \eqref{eq:local_posterior} yields 
\begin{equation*}
\begin{aligned}
p(b \mid \theta_g,y)&\propto \exp \left\{ b^\top g(\hat b) - b^\top H(\hat b)b/2 + b^\top H(\hat b)\hat b - b^\top Qb/2 \right\} \\
&=\exp\left[ b^\top \{Q+ H(\hat b)\}\hat b - b^\top \{Q+H(\hat b)\} b/2 \right],
\end{aligned}
\end{equation*}
where the last line uses $g(\hat b) = Q \hat b$. Hence 
\begin{equation*}
b \mid \theta_g,y \sim \N \left( \hat b, \{Q + H(\hat b)\}^{-1} \right) \quad \text{approximately}.
\end{equation*}
Importantly, the precision matrix $Q + H(\hat b)$ remains sparse because $H(\hat b)$ is block diagonal.

Let $Q + H(\hat b) = L L^\top$ be the Cholesky factorization, where $L$ is lower triangular and inherits the sparsity of $Q + H(\hat b)$. Define the affine transformation,
\begin{equation*}
\tilde b = L^{\top} (b-\hat b) \approx \N(0, I),
\end{equation*}
which suggests that $\tilde b$ is approximately independent of $\theta_g$ and uncorrelated across components a posteriori. For the reparametrized model with variables $\tilde\theta=(\theta_g^\top,\tilde b^\top)^\top$ where $\tilde b = (\tilde b_1, \dots, \tilde b_n)^\top$, we consider the factorized variational approximation,
\begin{equation*}
q_\lambda(\tilde{\theta})= q(\theta_g) \prod_{i} q(\tilde{b}_{i}).
\end{equation*}

The joint density of the reparametrized model follows from a standard change of variables. Conditional on $\theta_g$, the inverse map is $b = L^{-\top} \tilde{b} + \hat{b}$ with Jacobian $\partial b/\partial \tilde{b} = L^{-\top}$. It follows that $p(\tilde{b} \mid \theta_g)  = p(b \mid \theta_g) |\partial b/\partial \tilde{b}|= p(b \mid \theta_g) |L|^{-1}$. Therefore
\begin{align}\label{eq:RVB_logp_density}
\log p(y,\tilde\theta) = \log p(\theta_g) + \log p(b \mid \theta_g) + \sum_{i=1}^n \log p(y_i \mid b_i, \theta_g) - \log |L|,
\end{align}
where $b = L^{-\top} \tilde{b} + \hat{b}$ is substituted on the right-hand side.

We highlight two key differences from \cite{Tan2021}. First, since we do not assume $p(b \mid \theta_g) = \prod_{i=1}^n p(b_i \mid \theta_g)$, the latent field is normalized jointly in $b$ to preserve its prior dependence structure. Second, to exploit sparsity in $Q$, the normalization is based on Cholesky factorization of the precision $Q + H(\hat b)$, rather than per-subject covariance factorizations.

\subsection{Latent non-Gaussian models}\label{sec:lngm}
Latent Gaussian models can be inadequate for data with outliers or abrupt changes due to their light tails \citep{Walder2020, Cabral2023}. In such settings, Gaussian priors may oversmooth and inflate variance components, compromising uncertainty quantification, whereas heavy-tail priors provide robustness to extreme local deviations (e.g. sudden jumps or spikes). Following \citet{Cabral2024}, we extend latent Gaussian models by replacing the Gaussian driving noise in the latent field with a heavy-tail alternative.

Let $b^G \sim \N(0, Q^{-1})$ with $Q = D^\top D$ for a predetermined matrix $D$ that depends on $\theta_g$. Then $D b^G = Z$ for $Z \sim \N(0, I)$. We obtain a non-Gaussian prior for $b$ by replacing the Gaussian driving noise $Z$ by $\epsilon$ so that $D b = \epsilon$. We assume $\epsilon$ has independent components $\epsilon_{ij}$, each following a symmetric normal-inverse Gaussian (NIG) distribution with mean 0, variance 1 and excess kurtosis $3\eta$, where $\eta > 0$ controls tail heaviness: $\eta \rightarrow 0$ recovers Gaussian noise, and larger $\eta$ yields heavier tails. Using the variance-mean mixture representation of the NIG distribution, $\epsilon_{ij} \mid v_{ij} \sim \N(0, v_{ij})$ and $v_{ij} \mid \eta \sim \IG(1, \eta^{-1})$, where IG denotes the inverse Gaussian distribution, the induced prior for $b$ can be expressed as
\[
b \mid D, v \sim \N(0, D^{-1} \diag(v) D^{-\top}), \quad 
v_{ij} \mid \eta \stackrel{\text{iid}}{\sim} \IG(1, \eta^{-1}).
\]
The mixing variables are $v = (v_1^\top, \dots, v_n^\top)^\top$, where $v_{ij}$ is the $j$th element of $v_i$. We place an exponential prior on the tail parameter $\eta \sim \Exp(\alpha_\eta)$ and set $\alpha_\eta=1$ to mitigate overfitting \citep{Cabral2023}. This yields a robust latent non-Gaussian model that captures extreme local deviations, while retaining the efficiency of a sparse precision structure.

Let $\theta = (b^\top, \theta_g^\top, v^\top, \eta)^\top$ denote variables in the augmented latent non-Gaussian model, whose joint density factorizes as 
\begin{equation*}\label{eq:non_gaussian}
p(y,\theta) = p(\eta) p(v\mid\eta) p(\theta_g) p(b\mid\theta_g,v) 
\prod_{i=1}^n p(y_i\mid b_i,\theta_g).
\end{equation*}
\citet{Cabral2024} develop MFVI for this model in R-INLA, using $q(\theta) = q(b,\theta_g) q(v) q(\eta)$ and maximizing the ELBO via coordinate ascent. A key advantage is that updating $q(b,\theta_g)$ amounts to fitting a latent Gaussian model in INLA. They also propose a collapsed variant that integrates out $\eta$ via $p(v) = \int p(\eta)p(v \mid \eta) d\eta$, yielding the approximation $q(b, \theta_g)q(v)$.

The MFVI scheme above enforces posterior independence between $b$ and $v$, even though $b$ depends on $v$ through $p(b\mid \theta_g,v)$. To mitigate the mismatch, we adopt the RVB approach and reparametrize the local variables so that they are approximately independent of $(\theta_g, v)$ a posteriori. For the latent non-Gaussian model, 
\begin{equation*}\label{eq:local_posterior_NG}
\begin{aligned}
p(b\mid \theta_g, v, y) 
&\propto \exp\!\left\{\sum_{i=1}^n \log p(y_i \mid b_i,\theta_g) - \frac{1}{2}b^\top Q_\NG b \right\},
\end{aligned}
\end{equation*}
where $Q_\NG = D^\top \diag(v)^{-1} D$ inherits sparsity from $D$. Let $\hat b_{\NG}$ be the mode of $p(b\mid \theta_g, v, y)$ obtained by Newton’s method. A second-order Taylor expansion of $\log p(y_i \mid b_i, \theta_g)$ about $\hat b_{\NG}$ as in Section \ref{sec:LGM} yields $b\mid \theta_g, v, y \sim \N(\hat b_{\NG}, \{Q_\NG + H(\hat b_{\NG})\}^{-1})$ approximately, with $H(\cdot)$ defined in \eqref{eq:g&H}. Let $L_{\NG}$ be the Cholesky factor of the precision $Q_\NG + H(\hat b_{\NG})$. Then 
\[
\tilde b = L_{\NG}^{\top}\bigl(b-\hat b_{\NG}\bigr) \approx \N(0, I),
\]
so $\tilde b$ is a posteriori independent of $(\theta_g, v)$ and has independent components approximately.

For the reparametrized model with $\tilde\theta = (\tilde b^\top, \theta_g^\top, v^\top, \eta)^\top$, we consider MFVI with 
\begin{align*} 
q_\lambda(\tilde{\theta}) = q(\eta) q(\theta_g) \prod_{i} q(\tilde b_{i}) \prod_j q(v_{ij}) ,
\end{align*}
where $\tilde b = (\tilde b_1, \dots, \tilde b_n)^\top$. The log joint density of the reparametrized model follows from a standard change of variables, analogous to \eqref{eq:RVB_logp_density}.

In \cite{Cabral2024}, $q(\theta) = q(b,\theta_g) q(v) q(\eta)$ implies $q(v) = \prod_{i,j} q(v_{ij})$, and the optimal $q(\eta)$ and $q(v_{ij})$ lie in the generalized inverse Gaussian  family. Under our reparametrization, $v$ becomes non-separably coupled with other variables, so these results no longer apply. We therefore impose $q(v) = \prod_{i,j} q(v_{ij})$ explicitly and specify tractable variational families for $q(v_{ij})$ and $q(\eta)$ in Section \ref{sec:CSN}. Although $\eta$ can be marginalized as in \cite{Cabral2024}, we do not do so for simplicity and optimization stability. Marginalizing $\eta$ may further accelerate convergence and reduce underestimation of uncertainty.

\section{Improving Gaussianization via skewness reduction} \label{sec:skewness_correction}
In hierarchical models, local conditional posteriors may remain non-Gaussian even as $n$ grows, since each $b_i$ is informed mainly by $y_i$. Non-Gaussian likelihoods, near-boundary data, and weak within-unit signal can further leave $p(b\mid\theta_g,y)$ skewed or heavy-tailed. In Section \ref{sec:LGM}, an affine transformation maps $b$ to normalized variables $\tilde{b}$ that are closer to standard Gaussian, but residual skewness and tail mismatch may persist. Here, we further Gaussianize $\tilde{b}$ by reducing skewness, producing $b^*$. Specifically, we represent $p(\tilde{b} \mid \theta_g, y)$ as a $K$-component skew-symmetric density and derive a stochastic representation that enables simulation of $b^*$ from the symmetrized density. This yields a closed form reparametrized joint density, allowing ELBO optimization via the reparametrization trick. We refer to the approach as KNorm-RVB, as it augments RVB's normalization with skewness reduction via a $K$-component skew-symmetric representation. We first review skew-symmetric densities, introduce the $K$-component extensions and then derive the reparametrized log joint density analytically. We present the method for latent Gaussian models, and the extension to latent non-Gaussian models follows by additionally conditioning on mixing variables $v$.

\subsection{Skew-symmetric representation and construction} \label{sec:skewness_approx of local cond post}
Let $\hat{\theta} \in \mathbb{R}^d$ be fixed. \cite{Wang2004} establish two complementary results on constructing and representing skew-symmetric densities. First, let $f_{\hat\theta}: \mathbb{R}^d \rightarrow \mathbb{R}_+$ be a probability density function (pdf) that is symmetric about $\hat\theta$, which means $f(\theta) = f(2\hat{\theta} - \theta)$, and $w_{\hat\theta}:\mathbb{R}^d\rightarrow[0,1]$ be a skewing function satisfying $w_{\hat\theta}(\theta)+w_{\hat\theta}(2\hat\theta-\theta)=1$. Then  
\begin{equation*} \label{eq: two-skew-symm}
g(\theta) = 2 f_{\hat\theta}(\theta) w_{\hat\theta}(\theta),
\end{equation*}
is a pdf obtained by a {\em skew-symmetric construction}, where skewness is introduced via $w_{\hat\theta}(\cdot)$ without additional parameters. Independent samples from $g$ can be obtained by a rejection-free sampler: draw $x \sim f_{\hat\theta}$ and $u \sim \text{Unif}[0,1]$ independently, and set $\theta = x$ if $u \leq w_{\hat\theta}(x)$, and $\theta = 2\hat{\theta} - x$ otherwise. Second, any pdf $g: \mathbb{R}^d \rightarrow \mathbb{R}_+$ admits a unique {\em skew-symmetric representation} of the form $g(\theta) = 2 f_{\hat\theta}(\theta) w_{\hat\theta}(\theta)$, where $f_{\hat\theta} (\theta) = \{g(\theta) + g(2\hat{\theta} - \theta) \}/2$ is a pdf symmetric about $\hat\theta$ and $w_{\hat\theta}(\theta) = g(\theta)/\{g(\theta) + g(2\hat{\theta} - \theta) \}$ is a skewing function.

\subsection{$K$-component extensions} \label{sec:K_skew_rep}
The skew-symmetric representation of \cite{Wang2004} symmetrizes $g$ by averaging two components, $g(\theta)$ and $g(2\hat\theta-\theta)$. We extend this idea to a $K$-component skew-symmetric representation by averaging over $2\leq K\leq 2^d$ components. This can be viewed as the finite group special case of the symmetry modulation framework \citep{Jupp2016}.

\begin{figure}[b!]
\centering
\begin{subfigure}[b]{0.4\textwidth}
    \centering
    \includegraphics[width=\textwidth]{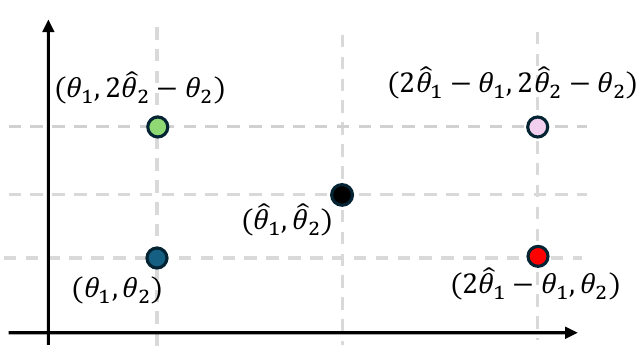}
\end{subfigure}
\hfill
\begin{subfigure}[b]{0.53\textwidth}
    \centering
    \includegraphics[width=\textwidth]{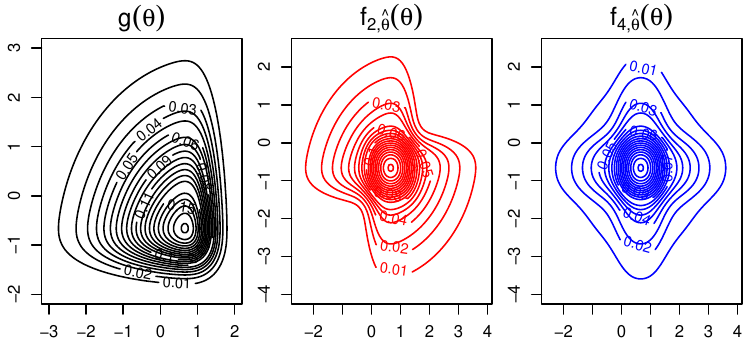}
\end{subfigure}
\caption{\small Left: Blockwise reflections about $\hat\theta$. Right: CSN density $g(\theta)$ with $\mu_\theta = 0$, $C_\theta = I_2$ and shape parameter $\gamma = (-5,5)^\top$, and symmetrized densities $f_{2,\hat\theta}(\theta)$ and $f_{4,\hat\theta}(\theta)$.}\label{Fig:skew2d}
\end{figure}

Figure \ref{Fig:skew2d} (left) depicts $\theta$ and its point-reflection $2\hat{\theta} - \theta$ for $d=2$, along with two points obtained by reflecting $\theta$ about $\hat{\theta}$ in one coordinate while holding the other fixed. Averaging over these four reflections gives the four-component symmetrized density,
\[
f_{4,\hat\theta} (\theta) = \frac{1}{4} \left\{ 
g \begin{pmatrix} \theta_1 \\ \theta_2 \end{pmatrix} + 
g \begin{pmatrix} 2\hat{\theta}_1 - \theta_1 \\ \theta_2 \end{pmatrix} + 
g \begin{pmatrix} \theta_1 \\ 2\hat{\theta}_2 - \theta_2 \end{pmatrix} + 
g \begin{pmatrix}  2\hat{\theta}_1 - \theta_1  \\ 2\hat{\theta}_2 - \theta_2 \end{pmatrix}
\right\}.
\]
To illustrate the effect of increasing $K$, let $g(\theta)$ be a highly asymmetric CSN density (details in Section \ref{sec:CSN}). Figure \ref{Fig:skew2d} (right) shows that $f_{2,\hat\theta}=\{g(\theta) + g(2\hat\theta-\theta)\}/2$ enforces only point symmetry about $\hat\theta$. In contrast, $f_{4,\hat\theta}$ also imposes coordinate-wise reflection symmetry about $\hat{\theta}$, producing a shape closer to an elliptical Gaussian-like form.

More generally, let $B_1,\dots,B_m$ be a partition of $\{1,\dots,d\}$ and $\varepsilon = (\varepsilon_1, \dots, \varepsilon_m)^\top$ with $\varepsilon_j\in\{-1,1\}$. Define $S(\varepsilon) = \blockdiag(\varepsilon_1 I_{|B_1|}, \dots, \varepsilon_m I_{|B_m|})$, where $|B_j|$ denotes the size of block $B_j$. For $m=1, \dots, d$, the induced {\em blockwise reflection group} is
\begin{align} \label{eq:refl_gp}
\G_m=\{S(\varepsilon):\varepsilon\in\{-1,1\}^m\}=\{S_1,\dots,S_K\},
\end{align}
a finite abelian group of orthogonal matrices under matrix multiplication, with identity $I$ and cardinality $K=|\G_m|=2^m$. We also define the trivial group $\G_0 = \{I_d\}$. The reflections induced by $\G_m$ about $\hat{\theta}$ are
\begin{align*}
T_{k, \hat{\theta}}(\theta) = \hat\theta+S_k(\theta-\hat\theta), \quad k=1,\dots,K.
\end{align*}
The set $\T_{\G_m,\hat\theta} = \{T_{1, \hat{\theta}},\dots,T_{K, \hat{\theta}}\}$ is a finite abelian group of bijections under composition, where $T_{j, \hat{\theta}}\circ T_{k, \hat{\theta}} (\theta) = \hat\theta + S_j S_k (\theta - \hat\theta)$. A pdf $f: \mathbb R^d \rightarrow \mathbb R_+$ is said to be $\T_{\G_m,\hat\theta}$-invariant if $f \{ T_{k, \hat{\theta}}(\theta) \} = f(\theta)$ $\forall \, \theta \in \mathbbm{R}^d$ and $k=1,\dots,K$. A function $w:\mathbb R^d\to[0,1]$ is called a $K$-component skewing function if $\sum_{k=1}^K w\{T_{k, \hat{\theta}}(\theta)\}=1$ $\forall \, \theta \in \mathbb R^d$.

\begin{proposition}[$K$-component skew-symmetric construction] \label{prop:K_skew_construction}
Let $f_{K,\hat\theta}: \mathbb R^d \rightarrow \mathbb R_+$ be a pdf that is $\T_{\G_m,\hat\theta}$-invariant for any $\hat\theta \in \mathbb R^d$ and $w_{K,\hat\theta}$ be a $K$-component skewing function. Then a function from $\mathbb R^d \rightarrow \mathbb R_+$ of the form 
$g(\theta) = K f_{K,\hat\theta}(\theta) w_{K,\hat\theta}(\theta)$
is a pdf.
\end{proposition}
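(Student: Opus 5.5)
To show that $g(\theta) = K f_{K,\hat\theta}(\theta)\, w_{K,\hat\theta}(\theta)$ is a pdf, we need two facts: $g$ is nonnegative, and $g$ integrates to one. Nonnegativity is immediate, since $K>0$, $f_{K,\hat\theta}\ge 0$ and $w_{K,\hat\theta}\in[0,1]$. The substance is therefore the normalization $\int_{\mathbb R^d} g(\theta)\,d\theta = 1$.

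We average over the reflection group $\T_{\G_m,\hat\theta}$. Fix $k\in\{1,\dots,K\}$ and change variables $\theta = T_{k,\hat\theta}(u) = \hat\theta + S_k(u-\hat\theta)$. This map is an affine bijection of $\mathbb R^d$ with Jacobian matrix $S_k$, and $|\det S_k| = 1$ because $S_k$ is diagonal with entries $\pm1$. Using the $\T_{\G_m,\hat\theta}$-invariance $f_{K,\hat\theta}\{T_{k,\hat\theta}(u)\} = f_{K,\hat\theta}(u)$, we obtain
\[
\int g(\theta)\,d\theta = K\int f_{K,\hat\theta}(u)\, w_{K,\hat\theta}\{T_{k,\hat\theta}(u)\}\,du, \qquad k=1,\dots,K .
\]

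Next we sum these $K$ identical expressions and divide by $K$. This gives
\[
\int g(\theta)\,d\theta = \int f_{K,\hat\theta}(u)\sum_{k=1}^K w_{K,\hat\theta}\{T_{k,\hat\theta}(u)\}\,du .
\]
By the defining property of a $K$-component skewing function, the inner sum equals $1$ for every $u$. The right-hand side is then $\int f_{K,\hat\theta}(u)\,du = 1$, since $f_{K,\hat\theta}$ is a pdf.

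No step is a genuine obstacle. The only points needing care are justifying the interchange of the finite sum and the integral, and the measurability of $w_{K,\hat\theta}\circ T_{k,\hat\theta}$. The interchange is routine because all integrands are nonnegative and bounded by $f_{K,\hat\theta}$, which is integrable. Measurability is implicit in the hypothesis that $w_{K,\hat\theta}$ is a skewing function. This argument directly generalizes the two-component construction of \cite{Wang2004}, which is the case $m=1$, $K=2$.
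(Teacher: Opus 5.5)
Your proof is correct and follows the paper's argument. You use a unit-Jacobian change of variables $\theta = T_{k,\hat\theta}(u)$ for each $k$, then the $\T_{\G_m,\hat\theta}$-invariance of $f_{K,\hat\theta}$, and finally the skewing identity $\sum_{k=1}^K w_{K,\hat\theta}\{T_{k,\hat\theta}(u)\}=1$ to reduce $\int g$ to $\int f_{K,\hat\theta}=1$; the only cosmetic difference is that the paper splits $K\int f_{K,\hat\theta}w_{K,\hat\theta}$ directly into $K$ transformed integrals, whereas you average $K$ identical copies, which is the same computation.
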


\begin{proposition}[$K$-component skew-symmetric representation] \label{prop:K_skew_representation}
Let $\hat\theta$ be any point in $\mathbb R^d$, $\G_m$ be defined in \eqref{eq:refl_gp} and $g: \mathbb R^d \rightarrow \mathbb R_+$ be a pdf. Then $g$ can be expressed as 
\[
g(\theta) = K f_{K,\hat\theta}(\theta) w_{K,\hat\theta}(\theta),
\] 
where  $w_{K,\hat\theta}(\theta) = g(\theta) / \sum_{k=1}^K g\{T_{k, \hat{\theta}}(\theta)\}$ is a $K$-component skewing function, with $w_{K,\hat\theta}(\theta)=1/K$ if $\sum_{k=1}^K g\{T_{k, \hat{\theta}}(\theta)\}=0$ and $f_{K,\hat\theta}(\theta) = \frac{1}{K}\sum_{k=1}^K g \{T_{k, \hat{\theta}}(\theta)\}$ is a $\T_{\G_m,\hat\theta}$-invariant pdf. This representation is unique on the support of $g$.
\end{proposition}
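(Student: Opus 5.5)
The plan is a direct verification built on the group structure of $\T_{\G_m,\hat\theta}$ established just before Proposition~\ref{prop:K_skew_construction}. Write $s(\theta)=\sum_{k=1}^K g\{T_{k,\hat\theta}(\theta)\}$. The key observation is that $s$ is $\T_{\G_m,\hat\theta}$-invariant. For any $j$, $T_{k,\hat\theta}\circ T_{j,\hat\theta}=T_{\sigma_j(k),\hat\theta}$, where $S_kS_j=S_{\sigma_j(k)}$. Since $\G_m$ is a group, $k\mapsto\sigma_j(k)$ is a bijection of $\{1,\dots,K\}$. Hence $s\{T_{j,\hat\theta}(\theta)\}=\sum_k g\{T_{\sigma_j(k),\hat\theta}(\theta)\}=s(\theta)$, a reindexing of the sum. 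Everything below follows from this.

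\textbf{Step 1: $f_{K,\hat\theta}=s/K$ is an invariant pdf.} Invariance is immediate from the observation above, and nonnegativity is clear. For normalization, each $T_{k,\hat\theta}$ is affine with orthogonal linear part $S_k$, so $|\det S_k|=1$. The change of variables then gives $\int g\{T_{k,\hat\theta}(\theta)\}\,d\theta=\int g=1$, so $\int f_{K,\hat\theta}=\frac1K\cdot K=1$.

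\textbf{Step 2: $w_{K,\hat\theta}$ is a $K$-component skewing function.} Where $s(\theta)>0$, invariance gives $s\{T_{k,\hat\theta}(\theta)\}=s(\theta)>0$ for all $k$. Therefore
\[\sum_k w_{K,\hat\theta}\{T_{k,\hat\theta}(\theta)\}=\sum_k g\{T_{k,\hat\theta}(\theta)\}/s(\theta)=1.\]
Where $s(\theta)=0$, the whole orbit also has $s=0$, so each of the $K$ terms equals $1/K$ by convention and the sum is again $1$. The range $[0,1]$ holds because $0\le g(\theta)\le s(\theta)$; the identity $T_{\cdot,\hat\theta}$ appears in the sum since $I\in\G_m$.

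\textbf{Step 3: the factorization.} Where $s>0$, $Kf_{K,\hat\theta}w_{K,\hat\theta}=s\cdot g/s=g$. Where $s=0$, $g(\theta)=0$ because $g(\theta)$ is one of the nonnegative summands, and the product is $s\cdot\frac1K=0$. So the factorization holds everywhere.

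\textbf{Step 4: uniqueness on the support of $g$.} Suppose $g=Kf w$ with $f$ an invariant pdf and $w$ a $K$-component skewing function. Evaluate at $T_{k,\hat\theta}(\theta)$, use $f\{T_{k,\hat\theta}(\theta)\}=f(\theta)$, and sum over $k$ to get $s(\theta)=Kf(\theta)\sum_k w\{T_{k,\hat\theta}(\theta)\}=Kf(\theta)$. This forces $f=s/K=f_{K,\hat\theta}$ everywhere. On $\{g>0\}$ we have $s>0$, so $w=g/(Kf)=g/s=w_{K,\hat\theta}$; off the support $w$ is not determined, which is why uniqueness is claimed only there.

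The only step requiring care is the reindexing (permutation) argument showing that $s$ is invariant, together with the bookkeeping on the zero set of $s$. The rest is routine.
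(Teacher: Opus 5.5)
Your proposal is correct and follows the same route as the paper's proof. Both arguments get invariance of the orbit sum from group closure and normalization from the unit-Jacobian change of variables; both split on the zero set of $s$, using that the identity lies in $\G_m$, and both prove uniqueness by averaging the alternative factorization over the orbit. Your explicit check that $w_{K,\hat\theta}$ takes values in $[0,1]$ (since $0\le g\le s$) is a small detail the paper leaves implicit.
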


Propositions \ref{prop:K_skew_construction} and \ref{prop:K_skew_representation} extend Propositions 1 and 3 of \citet{Wang2004} to $K$ components. Proposition \ref{prop:K_skew_construction} shows that a $\T_{\G_m,\hat\theta}$-invariant pdf multiplied by a $K$-component skewing function and rescaled by $K$ is a valid pdf. Proposition \ref{prop:K_skew_representation} establishes the converse. Given $\G_m$ and $\hat\theta$, any pdf admits a unique $K$-component skew-symmetric representation. The \citet{Wang2004} results are recovered when $m=1$ and $\G_1=\{I,-I\}$, while $m=d$ gives $K=2^d$, which is typically prohibitive for large $d$. Section \ref{sec: choose m} discusses practical choices of $m$.

Proposition \ref{prop: K_skew2symm} provides rejection-free samplers for (i) the $K$-component skew-symmetric density and (ii) its symmetrized $\T_{\G_m,\hat\theta}$-invariant density, generalizing as well as reversing the scheme of \cite{Wang2004}. To approximate a generic posterior $\pi(\theta) = p(\theta, y)/p(y)$ using a $K$-component skew-symmetric density, Theorem \ref{thm:K_optimal_skew_KL} characterizes the $K$-component skewing function that minimizes the KLD, extending results in \cite{Pozza2026} from the 2-component to the general $K$-component case.

\begin{proposition}[Stochastic representations]\label{prop: K_skew2symm}
Let $g(\theta) = K f_{K,\hat\theta}(\theta) w_{K,\hat\theta}(\theta)$ be the $K$-component skew-symmetric density constructed in Proposition \ref{prop:K_skew_construction}. 
\begin{enumerate}[label=(\roman*), nosep]
\item Sample $x \sim f_{K,\hat\theta}$. Conditional on $x$, draw $J \in \{1,\dots,K\}$ with $\Pr(J=k\mid x) = w_{K, \hat\theta} \{T_{k, \hat{\theta}}(x)\}$ for $k=1,\dots,K$. Then $\theta=T_{J, \hat{\theta}} (x)$ has pdf $g$.
\item Draw $\theta \sim g$ and $U \sim \text{Unif}\,\{1,\dots,K\}$ independently. Then $x=T_{U, \hat{\theta}} (\theta)$ has pdf $f_{K,\hat\theta}$.
\end{enumerate}
\end{proposition}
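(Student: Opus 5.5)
Both parts follow from a change-of-variables computation with the reflections $T_{k,\hat\theta}$. Each $T_{k,\hat\theta}$ is an affine bijection with linear part $S_k$, which is diagonal with entries $\pm1$. So $|\det S_k|=1$, $T_{k,\hat\theta}^{-1}=T_{k,\hat\theta}$ (since $S_k^2=I$), and the map preserves Lebesgue measure. We also use the group structure of $\T_{\G_m,\hat\theta}$. For fixed $j$, the map $k\mapsto T_{j,\hat\theta}\circ T_{k,\hat\theta}$ is a permutation of $\T_{\G_m,\hat\theta}$, because $\G_m$ is a group and $S_jS_k$ runs over $\G_m$ as $k$ does. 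Two consequences follow. First, $f_{K,\hat\theta}$ is invariant, so $f_{K,\hat\theta}\{T_{k,\hat\theta}(\theta)\}=f_{K,\hat\theta}(\theta)$. Second, the skewing identity $\sum_k w_{K,\hat\theta}\{T_{k,\hat\theta}(\theta)\}=1$ holds at every point, in particular at any reflected point.

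For (i), the law of $\theta$ is computed by conditioning on $J$. For a bounded measurable $h$,
\[
\E\{h(\theta)\}=\sum_{k=1}^K\int h\{T_{k,\hat\theta}(x)\}\,w_{K,\hat\theta}\{T_{k,\hat\theta}(x)\}\,f_{K,\hat\theta}(x)\,dx .
\]
In the $k$th term, substitute $z=T_{k,\hat\theta}(x)$. Then $x=T_{k,\hat\theta}(z)$ and the Jacobian is $1$. Invariance gives $f_{K,\hat\theta}(x)=f_{K,\hat\theta}(z)$, so each term equals $\int h(z)w_{K,\hat\theta}(z)f_{K,\hat\theta}(z)\,dz$. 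Summing the $K$ identical terms gives $\int h(z)\,K f_{K,\hat\theta}(z)w_{K,\hat\theta}(z)\,dz=\int h\,g$, so $\theta$ has pdf $g$. One check is needed beforehand: the weights $w_{K,\hat\theta}\{T_{k,\hat\theta}(x)\}$ are nonnegative and sum to one for every $x$, by the definition of a $K$-component skewing function. The sampler is therefore well defined and rejection-free.

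For (ii), independence of $\theta$ and $U$ gives
\[
\E\{h(x)\}=\frac1K\sum_{k=1}^K\int h\{T_{k,\hat\theta}(\theta)\}g(\theta)\,d\theta .
\]
In the $k$th term, substitute $z=T_{k,\hat\theta}(\theta)$. Since $T_{k,\hat\theta}$ is an involution with unit Jacobian, the term becomes $\int h(z)\,g\{T_{k,\hat\theta}(z)\}\,dz$. The density of $x$ is then $\frac1K\sum_k g\{T_{k,\hat\theta}(z)\}$, which is the symmetrized density of Proposition \ref{prop:K_skew_representation}. It remains to identify this with $f_{K,\hat\theta}$. Write $g=Kf_{K,\hat\theta}w_{K,\hat\theta}$ and apply invariance of $f_{K,\hat\theta}$:
\[
\frac1K\sum_k g\{T_{k,\hat\theta}(z)\}=f_{K,\hat\theta}(z)\sum_k w_{K,\hat\theta}\{T_{k,\hat\theta}(z)\}=f_{K,\hat\theta}(z).
\]

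The main obstacle is not computational. It is being careful that the steps use only the properties actually assumed: each $T_{k,\hat\theta}$ is an involution, it has unit Jacobian, and $f_{K,\hat\theta}$ is invariant. Part (ii) must not quietly assume that $w_{K,\hat\theta}$ is the specific skewing function from Proposition \ref{prop:K_skew_representation}; the final identity only needs the general skewing property. It is also worth recording that the argument holds for any pdf $f_{K,\hat\theta}$ and any skewing $w_{K,\hat\theta}$ as in Proposition \ref{prop:K_skew_construction}. No uniqueness statement is needed here.
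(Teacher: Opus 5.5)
Your proof is correct and follows the paper's argument: both parts reduce to the substitution $z=T_{k,\hat\theta}(\cdot)$, using that each reflection is a unit-Jacobian involution, together with the assumed $\T_{\G_m,\hat\theta}$-invariance of $f_{K,\hat\theta}$ and the skewing identity $\sum_k w_{K,\hat\theta}\{T_{k,\hat\theta}(\cdot)\}=1$. The only difference is that you argue through expectations of test functions $h$, whereas the paper writes the densities directly via the law of total probability; the two are equivalent. Your remark on the group and permutation structure is not needed, since invariance of $f_{K,\hat\theta}$ is a hypothesis rather than a consequence.
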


\begin{theorem}[Optimal skewing function] \label{thm:K_optimal_skew_KL}
Given reflection point $\hat{\theta}$, blockwise reflection group $\G_m$ and $\T_{\G_m,\hat\theta}$-invariant pdf $\bar q_{K,\hat\theta}(\theta)$, $\pi(\theta) = p(\theta, y)/p(y)$ is approximated by $q_{K,\hat\theta}(\theta) = K\bar q_{K,\hat\theta}(\theta) w_{K,\hat\theta}(\theta)$, where $w_{K,\hat\theta}(\theta)$ is a $K$-component skewing function. The $K$-component skewing function minimizing $\KL(\pi\| q_{K,\hat\theta})$ is 
\[
w^*_{K,\hat\theta}(\theta)= \frac{p(\theta,y)}{\sum_{k=1}^K p\{ T_{k, \hat{\theta}}(\theta), y\} }.
\]
\end{theorem}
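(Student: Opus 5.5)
The plan is to reduce the problem to a pointwise optimization on the orbits of $\T_{\G_m,\hat\theta}$. Write
\[
\KL(\pi\| q_{K,\hat\theta}) = \int \pi\log\pi \, d\theta - \log K - \int \pi(\theta)\log \bar q_{K,\hat\theta}(\theta)\,d\theta - \int \pi(\theta)\log w_{K,\hat\theta}(\theta)\,d\theta .
\]
Since $\bar q_{K,\hat\theta}$ and $\hat\theta$ are fixed, only the last term depends on $w_{K,\hat\theta}$. It therefore suffices to maximize $\int \pi(\theta)\log w(\theta)\,d\theta$ over $K$-component skewing functions $w$.

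To do this, I will symmetrize the integral over the group. Each $T_{k,\hat\theta}$ is an affine map with orthogonal linear part $S_k$, so it has unit Jacobian. Because $\T_{\G_m,\hat\theta}$ is a group, the map $j\mapsto T_{j,\hat\theta}\circ T_{k,\hat\theta}$ permutes $\{1,\dots,K\}$. Substituting $\theta\mapsto T_{k,\hat\theta}(\theta)$ and averaging over $k$ gives
\[
\int \pi(\theta)\log w(\theta)\,d\theta = \frac1K\int \sum_{k=1}^K \pi\{T_{k,\hat\theta}(\theta)\}\log w\{T_{k,\hat\theta}(\theta)\}\,d\theta .
\]
For each fixed $\theta$, the numbers $a_k=\pi\{T_{k,\hat\theta}(\theta)\}\ge 0$ are given, and the numbers $p_k=w\{T_{k,\hat\theta}(\theta)\}$ satisfy $p_k\ge 0$ and $\sum_k p_k=1$ by the skewing constraint. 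Gibbs' inequality, applied as the nonnegativity of the KLD between the discrete distributions $(a_k/\sum_j a_j)_k$ and $(p_k)_k$, shows that $\sum_k a_k\log p_k$ is maximized uniquely, when $\sum_j a_j>0$, by $p_k=a_k/\sum_j a_j$.

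Next I need the pointwise maximizer to define a single function $w^*$. Setting $w^*(\theta)=\pi(\theta)/\sum_k \pi\{T_{k,\hat\theta}(\theta)\}$, the group property gives $\sum_k \pi\{T_{k,\hat\theta}(T_{j,\hat\theta}(\theta))\}=\sum_k\pi\{T_{k,\hat\theta}(\theta)\}$. Hence $w^*\{T_{k,\hat\theta}(\theta)\}=a_k/\sum_j a_j$, which is exactly the pointwise optimum, and $w^*$ is a $K$-component skewing function. On orbits where $\sum_j a_j=0$ the integrand vanishes, and I set $w^*=1/K$ there as in Proposition \ref{prop:K_skew_representation}. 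Finally, the factor $p(y)$ cancels between numerator and denominator, which yields the stated formula in terms of $p(\theta,y)$.

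The main obstacle is making the symmetrization and interchange rigorous. I must ensure that $\int\pi\log\bar q$ is finite, so that the KLD decomposition is not of the form $\infty-\infty$, and I must handle the convention $0\log 0$ on regions where $\pi$ vanishes. I would assume $\KL(\pi\|q)<\infty$ for at least one admissible $w$ and restrict attention to such $w$. The pointwise inequality then integrates to give the result, with optimality holding $\pi$-a.e.
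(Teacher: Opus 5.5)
Your proposal is correct and follows essentially the same route as the paper: both average over the orbit $\{T_{k,\hat\theta}(\theta)\}_{k=1}^K$ using unit-Jacobian changes of variables, and then apply the discrete Gibbs inequality to the weight vector $(w\{T_{k,\hat\theta}(\theta)\})_{k=1}^K$. Where you write $a_k=\pi\{T_{k,\hat\theta}(\theta)\}$, the paper uses the representation $\pi=K\bar\pi_{K,\hat\theta}w^*_{K,\hat\theta}$ from Proposition \ref{prop:K_skew_representation}. The only real difference is that the paper works directly with $\KL(\pi\|q_{K,\hat\theta})-\KL(\pi\|q^*_{K,\hat\theta})=\int\pi\log(w^*_{K,\hat\theta}/w_{K,\hat\theta})\,d\theta$, which avoids splitting off $\int\pi\log\bar q_{K,\hat\theta}$ and so largely sidesteps the $\infty-\infty$ concern you raise.
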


\subsection{Reducing skewness in normalized local conditional posterior} \label{sec:K_skew_loc_cond_pos}
In Section \ref{sec:LGM}, we normalize the local variables via $\tilde b=L^\top(b-\hat b)$, where $\hat b$ is the mean of a Gaussian approximation to $p(b\mid\theta_g,y)$ with precision $Q+H(\hat b) = LL^\top$. Here, we seek to reduce residual skewness in $\tilde{b}$ by leveraging the $K$-component skew-symmetric representation of $p(\tilde b\mid\theta_g,y)$ ensured by Proposition \ref{prop:K_skew_representation}, for some reflection point $c$ and blockwise reflection group $\G_m$. Proposition \ref{prop: K_skew2symm} (ii) then yields a draw $b^*$ from the corresponding $\T_{\G_m, c}$-invariant density, and we later show that $b^*$ is closer to $\N(c, I)$ than $\tilde{b}$.

Adapting Section \ref{sec:K_skew_rep} with $\theta$ replaced by $\tilde b$, let $B_1,\dots,B_m$ be a partition of $\{1,\dots,N\}$ and $c\in\mathbb R^N$ denote the reflection point, whose choice is discussed in Section \ref{sec:CSN}. Procedure \ref{Algs:transformation_K} summarizes the construction of $b^*$ from $b$. Step 1 is deterministic and maps $b$ to $\tilde{b}$ via the affine transformation of Section \ref{sec:LGM}. Step 2 is stochastic and targets residual skewness. It treats $\tilde{b}$ as a draw from a $K$-component skew symmetric density as in Proposition \ref{prop:K_skew_representation}, and applies Proposition \ref{prop: K_skew2symm} to sample from the $\T_{\G_m, c}$-invariant pdf, $\frac{1}{K}\sum_{k=1}^K p_{\tilde{b} \mid \theta_g,y}\{T_{k,c}(\tilde{b})\}$, where $p_X(\cdot)$ denotes the pdf of $X$. Concretely, step 2 samples $S_*$ uniformly from $\G_m$ and applies the corresponding reflection to $\tilde{b}$. We refer to Procedure \ref{Algs:transformation_K} as KNorm-RVB, since it augments normalization with skewness reduction induced by a $K$-component skew-symmetric representation. The generalized RVB method of Section \ref{sec:g_RVB} is recovered when $\G_m = \G_0 =\{I_N\}$.

\begin{procedure}[tb!]
\caption{KNorm-RVB (normalization and $K$-component skewness reduction)}
\label{Algs:transformation_K}
\begin{algorithmic}[1]
\State Normalization: $\tilde b=L^\top(b-\hat b)$, where $\hat b$ is mode of $p(b\mid\theta_g,y)$ and $LL^\top = Q+H(\hat b)$.
\State Skewness reduction: Sample $S_* \sim \text{Unif}(\G_m)$ and set $b^* = c+S_*(\tilde b-c)$, for reflection point $c$ and blockwise reflection group $\G_m$.
\end{algorithmic}
\end{procedure}

Theorem \ref{thm: log_joint_density_skew_K} derives (i) the conditional density of $b^*$ and (ii) the reparametrized log joint density, by integrating out the uniformly distributed reflection index in step 2 of Procedure \ref{Algs:transformation_K}. These analytic expressions enable gradient computation by automatic differentiation and ELBO optimization via the reparametrization trick. Part (iii) shows that larger blockwise reflection groups yield stronger skewness reduction, as $b^*$ becomes closer to $\N(c,I)$ in KLD, where $\phi(\cdot\mid \mu, \Sigma)$ denotes the pdf of $\N(\mu, \Sigma)$. This motivates using a larger $K$ and aligns with Figure \ref{Fig:skew2d}, where averaging over four rather than two components produces a symmetrized density that is more nearly Gaussian. Part (iv) is a special case of (iii) and guarantees improvement from skewness reduction, showing that $b^*$ is always closer to $\N(c,I)$ than $\tilde b$.

\begin{theorem}[Effects of skewness reduction]\label{thm: log_joint_density_skew_K}
Let $b^*$ be obtained from $b$ via Procedure \ref{Algs:transformation_K}.
\begin{enumerate}[label=(\roman*), nosep]
\item $p(b^* \mid \theta_g, y) = \frac{1}{K|L|} \sum_{k=1}^Kp_{b \mid \theta_g, y} [ \hat b + L^{-\top}\{ c + S_k (b^*-c)\} ]$. 

\item Let $\theta^* = (b^{*\top}, \theta_g^\top)^\top$.
\begin{align} \label{eq:LSC_RVB_density_K}
\log p(y,\theta^*) = \log p(\theta_g) + \log \left( \frac{1}{K|L|}\sum_{k=1}^K p_{b,y\mid \theta_g} [ \hat b + L^{-\top} \{ c + S_k(b^* - c) \},y ] \right) 
\end{align}

\item
Let $\H$ and $\G$ be blockwise reflections groups acting on $\tilde{b}$ with reflection point $c$, with $\H \le \G$ ($\H$ is a subgroup of $\G$). Define $p_{\G}(z\mid \theta_g,y) = \frac{1}{|\G||L|}\sum_{S\in\G}p_{b\mid\theta_g,y} [\hat b + L^{-\top}\{c+S(z-c)\} ]$. Then $\KL\left\{\phi(\cdot\mid c,I) \, \| \, p_{\G}(\cdot\mid \theta_g,y)\right\} \leq \KL\left\{\phi(\cdot\mid c,I) \, \| \, p_{\H}(\cdot\mid \theta_g,y)\right\}$, with strict inequality unless $p_{\H}(\cdot\mid \theta_g,y)$ is $\T_{\G, c}$-invariant. 

\item $\KL\left\{\phi(b^*\mid c,I) \, \| \, p(b^*\mid \theta_g,y)\right\}
\leq \KL \{\phi(\tilde b\mid c,I) \, \| \, p(\tilde b\mid \theta_g,y) \}$. The inequality is strict unless $p_{\tilde b\mid\theta_g,y}(z)=p_{\tilde b\mid\theta_g,y}\{c+S_k(z-c)\}$ $\forall \, k=1,\dots,K$.
\end{enumerate}
\end{theorem}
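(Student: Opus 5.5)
Parts (i) and (ii) follow from two changes of variables; parts (iii) and (iv) follow from convexity of $\KL$ combined with the invariance of $\phi(\cdot\mid c,I)$ under reflections about $c$.

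\emph{Parts (i) and (ii).} Conditional on $\theta_g$, the map $b\mapsto\tilde b=L^\top(b-\hat b)$ is affine with inverse $b=\hat b+L^{-\top}\tilde b$. Hence
\[
p_{\tilde b\mid\theta_g,y}(z)=|L|^{-1}p_{b\mid\theta_g,y}(\hat b+L^{-\top}z).
\]
Next, $S_*$ is uniform on $\G_m$ and independent of $\tilde b$. Each $T_{k,c}$ is an involution with unit Jacobian, since $S_k^2=I$ and $|\det S_k|=1$. Conditioning on $S_*=S_k$ and averaging therefore gives
\[
p(b^*\mid\theta_g,y)=\frac1K\sum_k p_{\tilde b\mid\theta_g,y}\{c+S_k(b^*-c)\}.
\]
This is exactly Proposition \ref{prop: K_skew2symm}(ii) applied to $g=p_{\tilde b\mid\theta_g,y}$, and substituting the previous display yields (i). For (ii), the same argument is run with the joint density $p(b,y\mid\theta_g)$ in place of the conditional. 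The normalizing constant $p(y\mid\theta_g)$ factors out of the sum, so
\[
p(b^*,y\mid\theta_g)=\frac{1}{K|L|}\sum_k p_{b,y\mid\theta_g}[\cdot,y].
\]
Multiplying by $p(\theta_g)$ and taking logs gives \eqref{eq:LSC_RVB_density_K}.

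\emph{Part (iii).} Write $\phi_c=\phi(\cdot\mid c,I)$. Since $\H\le\G$, choose coset representatives $R_1,\dots,R_r$ of $\G/\H$, where $r=|\G|/|\H|$. Then
\[
p_\G(z)=\frac1r\sum_{j} p_\H\{c+R_j(z-c)\},
\]
so $p_\G$ is a uniform mixture of the reflected densities $p_\H\circ T_{R_j,c}$.

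The function $q\mapsto\KL(\phi_c\|q)=\int\phi_c\log\phi_c-\int\phi_c\log q$ is convex in $q$, and strictly convex because $\log$ is strictly concave. Jensen's inequality applied to $-\log$ of the mixture gives
\[
\KL(\phi_c\|p_\G)\le\frac1r\sum_j\KL(\phi_c\|p_\H\circ T_{R_j,c}).
\]
Each term on the right equals $\KL(\phi_c\|p_\H)$. To see this, change variables $z\mapsto T_{R_j,c}(z)$ (unit Jacobian) and use $\phi_c\circ T_{R_j,c}=\phi_c$, which holds because $R_j$ is orthogonal.

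For strictness, equality in Jensen holds only if the values $p_\H\{c+R_j(z-c)\}$ coincide for $\phi_c$-almost every $z$. Since $\phi_c>0$ everywhere, this means they coincide Lebesgue-a.e., so $p_\H$ is $\T_{\G,c}$-invariant (with continuity upgrading a.e. to everywhere). The infinite-KL edge case should also be handled: if $\KL(\phi_c\|p_\H)=\infty$ the inequality is trivial.

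\emph{Part (iv).} This is (iii) with $\H=\G_0$ and $\G=\G_m$. Here $p_{\G_0}=p(\tilde b\mid\theta_g,y)$, and $p_{\G_m}=p(b^*\mid\theta_g,y)$ by (i). The strictness condition becomes $p_{\tilde b}(z)=p_{\tilde b}\{c+S_k(z-c)\}$ for all $k$.

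The main obstacle is making the equality case of (iii) rigorous: Jensen only forces the reflected densities to agree almost everywhere, so a mild regularity condition (e.g. continuity of $p_\H$) or an a.e. qualifier must be stated. One must also confirm that the coset decomposition makes $p_\G$ an exact mixture of reflections of $p_\H$. That follows from $\G$ being abelian and each coset $R_j\H$ being a translate of $\H$ inside $\G$.
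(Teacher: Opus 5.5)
Your proposal is correct and follows the paper's proof essentially step for step: the same two changes of variables for (i)--(ii); the same coset decomposition writing $p_{\G}$ as a uniform mixture of reflected copies of $p_{\H}$, combined with Jensen's inequality and the orthogonal invariance of $\phi(\cdot\mid c,I)$, for (iii); and the specialization $\H=\{I\}$, $\G=\G_m$ for (iv). Your attention to the almost-everywhere nature of the equality case goes slightly beyond the paper. In that equality case, though, you should spell out (as the paper does) why agreement across the coset representatives gives full $\T_{\G,c}$-invariance: $p_{\H}$ is already $\T_{\H,c}$-invariant, and one representative lies in $\H$.
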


Theorem \ref{thm: log_joint_density_skew_K} extends directly to the latent non-Gaussian model by additionally conditioning on the mixing variables $v$, and replacing $\hat{b}$ with the mode $\hat b_{\NG}$ of $p(b \mid \theta_g, v, y)$ and $L$ with the Cholesky factor $L_{\NG}$ of $Q_\NG + H(\hat{b}_\NG)$. The log joint density of the reparametrized latent non-Gaussian model with variables $\theta^*=(b^{*\top}, \theta_g^\top, v^\top, \eta)^\top$ is
\begin{align}\label{eq:LSC_RVB_density_K_nonGaussian}
\log p(y,\theta^*)
&=\log\left( \frac{1}{K|L_\NG|}\sum_{k=1}^K p_{b,y\mid v,\theta_g} [\hat b_{\NG} + L_{\NG}^{-\top}\{ c + S_k(b^* - c)\},y ] \right) 
\\ \nonumber
&\quad +\log p(\theta_g) +\log p(v\mid \eta) +\log p(\eta).
\end{align}

To illustrate the gains from skewness reduction, we simulate data from a Bernoulli GLMM with linear predictor $\eta_{ij} = -5 + 2x_{ij} + b_i$, where $x_{ij}\sim\Bernoulli(0.5)$ and $b_i\sim \N(0,3^2)$, for $i=1,\ldots,500$, $j=1,\ldots,3$. Figure \ref{Fig:skewness_correction}(a) shows that the 2-component skew-symmetric approximation of $p(b_1 \mid \theta_g, y)$ based on Theorem \ref{thm:K_optimal_skew_KL} substantially improves on the Gaussian approximation in Section \ref{sec:LGM}. Figure \ref{Fig:skewness_correction}(b) compares 10,000 draws of $\tilde{b}_1$ and $b_1^*$ from Procedure \ref{Algs:transformation_K}, taking $c=0$. The Q-Q plot aligns with Theorem \ref{thm: log_joint_density_skew_K}(iv), showing that skewness reduction yields $b_1^*$ much closer to a standard Gaussian than normalized-only $\tilde{b}_1$.

\begin{figure}[tb!]
\centering
\includegraphics[width=0.64\textwidth]{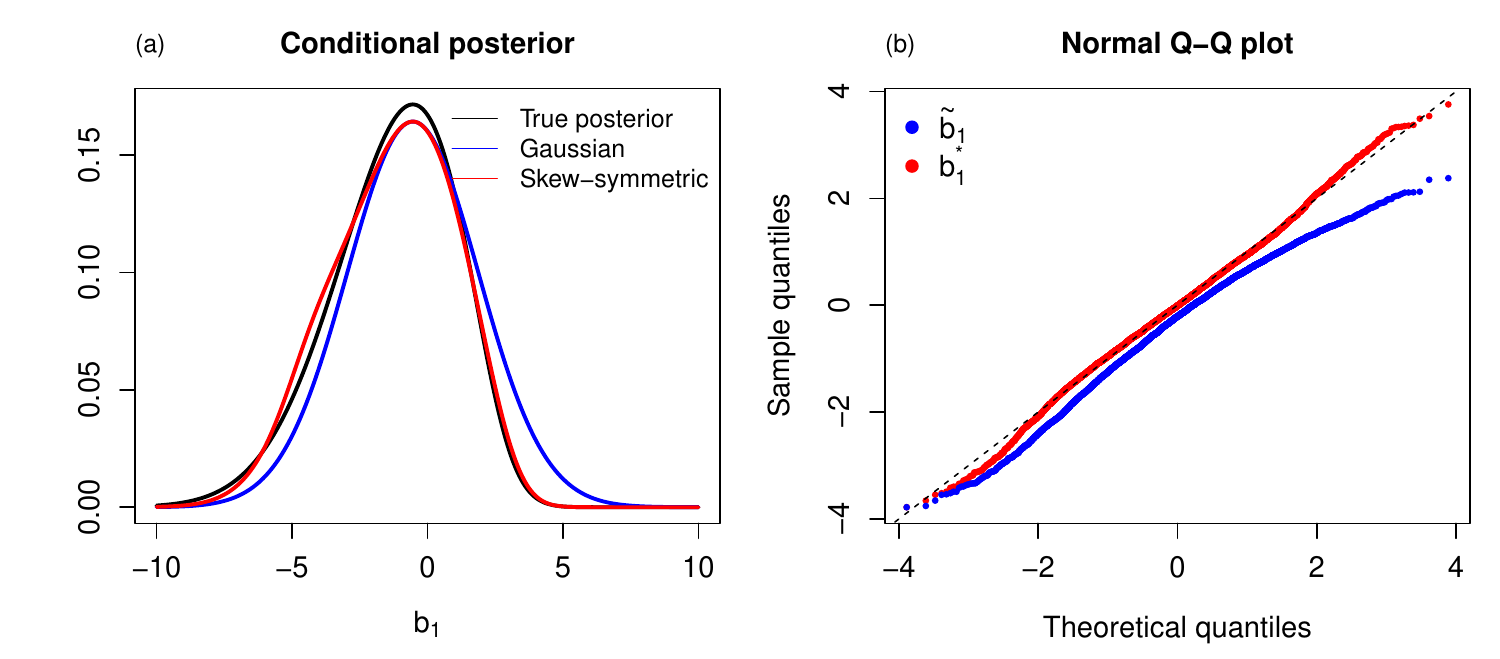}
\caption{\small Bernoulli GLMM. (a) Conditional posterior $p(b_1 \mid \theta_g, y)$ and its Gaussian and skew-symmetric approximations. (b) Q-Q plot of draws of $\tilde b_1$ and $b^*_1$.}\label{Fig:skewness_correction}
\end{figure}

A closely related approach is GLOSS-VA \citep{Kock2026} for latent Gaussian models with conditionally independent local variables. It builds on a conditionally structured Gaussian variational family, $q(\theta_g) \prod_{i} p(b_i \mid \theta_g)$ \citep{Tan2020B}, and introduces skewness via skewing functions for skew-symmetric densities, optimizing these jointly with the baseline variational parameters rather than as a post hoc correction \citep{Pozza2026}. While GLOSS-VA injects skewness and samples from skew-symmetric densities via the stochastic representation of \cite{Wang2004}, we instead remove skewness, using Proposition \ref{prop: K_skew2symm} to sample from the symmetrized density and Theorem \ref{thm: log_joint_density_skew_K} to marginalize the auxiliary sampling variables. A key distinction is that GLOSS-VA remains sensitive to parametrization, while the KNorm-RVB normalization can be viewed as a generalization of partial noncentering, which effectively eliminates the need to choose a parametrization.

\subsection{Choice of blockwise reflection group} \label{sec: choose m}
The blockwise reflection group $\G_m$ is defined by a partition $B_1, \dots, B_m$ of $\{1, \dots, N\}$ with $1 \leq m \leq N$, which induces $K=2^m$ blockwise reflections. While Theorem \ref{thm: log_joint_density_skew_K} favors large $m$ and hence large $K$ for stronger skewness reduction, this can be computationally prohibitive, because evaluating the reparametrized log joint density in \eqref{eq:LSC_RVB_density_K} and its gradients via automatic differentiation scales linearly in $K$, which is exponential in $m$.

A key exception arises when the conditional posterior factorizes, $p(b\mid\theta_g,y) = \prod_{i=1}^n p(b_i\mid\theta_g,y)$, e.g., in GLMMs with conditionally independent subject-level random effects. In this case, the Cholesky factor $L = \blockdiag(L_1, \dots, L_n)^\top$. Taking $m = n$ with subjectwise blocks $B_1, \dots, B_n$ yields $K=2^n$ blockwise reflections, but from Theorem \ref{thm: log_joint_density_skew_K}(i) and \eqref{eq:refl_gp}, 
\begin{align*}
&\log p(b^*\mid \theta_g,y) = \log \left( \frac{1}{|L|2^n}\sum\nolimits_{\varepsilon\in\{-1,1\}^n} p_{b \mid \theta_g, y} [\hat b+L^{-\top}\{c+S(\varepsilon)(b^*-c)\} ] \right) \\
&\quad= \log \left\{ \frac{1}{|L|2^n} \sum\nolimits_{\varepsilon\in\{-1,1\}^n} \left( \prod \nolimits_{i=1}^n p_{b_i \mid \theta_g, y} [\hat b_i+L_i^{-\top}\{c_i+\varepsilon_i(b_i^*-c_i)\} ]\right) \right\} \\
&\quad=\sum_{i=1}^n \log \left[ p_{b_i \mid \theta_g, y} (\hat b_i+L_i^{-\top}b_i^*) + p_{b_i \mid \theta_g, y} \{\hat b_i+L_i^{-\top}(2c_i-b_i^*)\} \right]
- \sum_{i=1}^n \log |L_i| - n \log 2.
\end{align*}
Thus, the computational cost grows linearly in $n$ rather than exponentially.

When $Q$ is not block diagonal, the conditional posterior does not factorize across subjects, so $m=n$ will require $K=2^n$ reflections and quickly becomes infeasible. We therefore partition $\{1, \dots, N\}$ into $m\ll n$ nearly equal blocks of sizes $s_1,\ldots,s_m$, keeping each subject’s observations within one block. Setting $m=1$ recovers the standard skew-symmetric representation, while $1 < m \ll n$ improves Gaussianization with manageable cost.

\section{Variational families and optimization algorithm} \label{sec:CSN}
This section specifies variational families for the local and global variables, as well as for the tail parameter and mixing variables in the latent non-Gaussian model. We then present the ELBO optimization algorithm for both model classes. Incorporating the skewness reduction from Section \ref{sec:K_skew_loc_cond_pos}, we work with reparametrized variables $\theta^*$. For the latent Gaussian model, $\theta^* = ({b^*}^\top, \theta_g^\top)^\top$ and $q_\lambda(\theta^*) = q(\theta_g) \prod_{i} q(b_i^*)$, whereas for the latent non-Gaussian model, $\theta^*=(b^{*\top}, \theta_g^\top, v^\top, \eta)^\top$ and $q_\lambda(\theta^*) = q(\eta) q(\theta_g) \prod_{i} q(b_i^*) \prod_j q(v_{ij})$. We use the same mean-field variational approximations as in Section \ref{sec:g_RVB} with $\tilde{b}_i$ replaced by $b^*_i$.

Since KNorm-RVB makes $\tilde{b}$ and $b^*$ approximately Gaussian, we take $q(\tilde{b}_i)$ and $q(b_i^*)$ as $\N(\mu_i, C_i C_i^\top)$. In contrast, the posteriors of $\theta_g$, $\eta$ and $v_{ij}$ may be skewed or heavy-tailed, motivating richer variational families. We map any constrained variables to $\mathbbm{R}$, e.g., $\ell_\eta = \log(\eta)$ and $\ell_{v_{ij}} = \log(v_{ij})$, and adopt the CSN subclass \citep{Tan2025} for $q(\theta_g)$, $q(\ell_\eta)$ and $q(\ell_{v_{ij}})$. This subclass captures skewness while remaining tractable, and admits a bounding line in each dimension. It reduces to the skew normal \citep[SN,][]{Azzalini1999} in one dimension. More flexible choices, such as skew-$t$ families or normalizing flows \citep{Papamakarios2021}, can likewise be paired with KNorm-RVB.

As the CSN subclass is only recently introduced in VI, we briefly summarize the key properties used here. For $x \in \mathbb{R}^d$, write $x = (x_i)$ and let $D_x = \diag(x)$. The CSN subclass is defined by an affine transformation of normalized independent  univariate SN random variables. Let $r_i \sim \SN(0,1,\gamma_i)$ independently for $i=1,\dots,d$, where $\gamma_i \in\mathbb{R}$ is the shape parameter, and define $r=(r_i)$ and $\gamma=(\gamma_i)$. Then $\E(r) = c_0 \delta$ and $\Cov(r) = D_\tau^2$, where $c_0=\sqrt{2/\pi}$, $\delta = (\delta_i)$ with $\delta_i = \gamma_i/\sqrt{1+\gamma_i^2}$ and $\tau = (\tau_i)$ with $\tau_i = \sqrt{1-c_0^2\delta_i^2}$. We normalize $r$ via $z = D_\tau^{-1}(r-c_0\delta)$ and define $\vartheta = \mu_\vartheta + C_\vartheta z$, so that $\E(\vartheta) = \mu_\vartheta$ and $\Cov(\vartheta) = C_\vartheta C_\vartheta^\top$ (with invertible $C_\vartheta$). By change of variables from $r$ to $\vartheta$, 
\begin{equation*}
\log q(\vartheta) = d\log(2) - \frac{d}{2}\log(2\pi) - \frac{r^\top r}{2} - \log |C_\vartheta| + \sum_{i=1}^d \{\log \Phi (\gamma_i r_i) + \log \tau_i\},
\end{equation*}
where $r=D_\tau C_\vartheta^{-1}(\vartheta - \mu_\vartheta) + c_0 \delta$ and $\Phi(\cdot)$ is the standard normal distribution function.

To use the reparametrization trick in ELBO optimization, write $\vartheta= \mu_\vartheta + C_\vartheta (D_\kappa w_2+D_\alpha \widetilde w_1)$, where $w_1, w_2 \overset{iid}{\sim} \N(0, I_d)$, $\widetilde w_1=|w_1|-c_0\bfone$, $\kappa = (\kappa_i)$ with $\kappa_i = \{1+(1-c_0^2)\gamma_i^2\}^{-1/2}$ and $\alpha=(\alpha_i)$ with $\alpha_i = \delta_i/\tau_i$. \cite{Tan2025} note that the CSN subclass ELBO has a stationary point at $\gamma=0$ which can hinder optimization, and propose a centered parametrization in terms of $\mu_\vartheta$, $C_\vartheta$ and $\alpha^3 = (\alpha_i^3)$ (Pearson's index of skewness) instead of $\gamma$. They also argue that restricting $C_\vartheta$ to a lower triangular Cholesky factor limits permissible rotations, and can reduce the ability of $q(\vartheta)$ to capture posterior asymmetries. Instead, they recommend an LU decomposition to preserve invertibility during stochastic optimization. Accordingly, we set $C_\vartheta = L_\vartheta U_\vartheta$, where $L_\vartheta$ is lower triangular and $U_\vartheta$ is  upper triangular with unit diagonal, and optimize $\{\mu_\vartheta, L_\vartheta, U_\vartheta, \alpha^3\}$ as the variational parameters of $q(\vartheta)$.

\begin{algorithm}[tb!]
    \caption{Stochastic optimization of ELBO in KNorm-RVB}
    \label{Algs:elbo_opti}
    \begin{algorithmic}[1]
    \Input Variational parameter $\lambda$, stepsize $\{\rho_s\}_{s=1}^T$, and blockwise reflection group $\G_m$
    \For{$s=1,\ldots,T$}
        \State $b_i^* = \mu_i + C_i z_i$ for $z_i \sim \N(0,I)$ $\forall\, i$ and set $c = (\mu_1^\top, \dots, \mu_n^\top)^\top$
        \State \parbox[t]{0.46\linewidth}{\raggedright\textbf{Latent Gaussian}}\hfill
               \parbox[t]{0.46\linewidth}{\raggedright\textbf{Latent non-Gaussian}}
        \State \parbox[t]{0.46\linewidth}{\raggedright Draw $\epsilon \sim \pi(\epsilon)$ and $\theta_g \gets t(\epsilon,\lambda)$}\hfill
               \parbox[t]{0.46\linewidth}{\raggedright Draw $\epsilon \sim \pi(\epsilon)$ and $(\theta_g, v, \eta) \gets t(\epsilon,\lambda)$}
        \State \parbox[t]{0.46\linewidth}{\raggedright$\theta^* \gets (b^{*\top},\theta_g^\top)^\top$}\hfill
               \parbox[t]{0.46\linewidth}{\raggedright$\theta^* \gets (b^{*\top}, \theta_g^\top, v^\top, \eta)^\top$}
        \State \parbox[t]{0.46\linewidth}{\raggedright Evaluate $\log p(y,\theta^*)$ using \eqref{eq:LSC_RVB_density_K}}\hfill
               \parbox[t]{0.46\linewidth}{\raggedright Evaluate $\log p(y,\theta^*)$ using \eqref{eq:LSC_RVB_density_K_nonGaussian}}
        \State Compute an ELBO estimate $\widehat{\L} \gets \log p(y,\theta^*)-\log q_{\lambda}(\theta^*)$
        \State Update $\lambda \gets\lambda + \rho_s \widehat{\nabla}_\lambda \L$, where $\widehat{\nabla}_\lambda \L \gets  \nabla_\lambda \theta^* \,  \{\nabla_{\theta^*}\log p(y, \theta^*) - \nabla_{\theta^*} \log q_\lambda (\theta^*) \} $
    \EndFor
    \end{algorithmic}
\end{algorithm}

Algorithm \ref{Algs:elbo_opti} summarizes stochastic optimization of the ELBO for KNorm-RVB in a general setting, where $q(\theta_g)$, $q(\ell_\eta)$ and $q(\ell_{v_{ij}})$ need not lie in the CSN subclass. We collect all variational parameters in $\lambda$ and form an unbiased estimate of $\nabla_\lambda \L$ using the reparametrization trick with sticking-the-landing (see section \ref{sec: RVB review}). The reflection point $c$ is set to the mean $(\mu_1^\top, \dots, \mu_n^\top)^\top$ of the Gaussian variational approximation for $b^*$, so $c$ is updated jointly with the other variational parameters in Algorithm \ref{Algs:elbo_opti}. Although $q(\theta^*)$ is defined in terms of $b^*$, posterior draws of original local variables $b$ can be recovered using Theorem \ref{thm:K_optimal_skew_KL} and Proposition \ref{prop: K_skew2symm}(i), and the sampling algorithm is given in Procedure \ref{Algs:forward_transformation_K} in the supplement.

\section{Exact recovery of mean and correlation matrix}\label{sec:theory}
Given a reflection point $c$ and the blockwise reflection group $\G_m$ ($m \geq 1$), KNorm-RVB (Procedure \ref{Algs:transformation_K}) normalizes local variables and reduces skewness, yielding $b^*$, which  is closer to $\N(c, I)$ than $\tilde{b}$ from RVB by Theorem \ref{thm: log_joint_density_skew_K}(iv). We next study the symmetries of $b^*$ induced by KNorm-RVB. Following \cite{Margossian2024}, we analyze VI robustness to misspecification (when the true posterior lies outside the variational family) in settings where $p(b^* \mid \theta_g, y)$ satisfies symmetry properties and $q(b^*)$ is a symmetry-preserving location-scale family. These results quantify KNorm-RVB's gains over RVB, inform the choice of $q(\theta^*)$, and guarantee recovery of the true posterior mean and correlation matrix under regularity conditions. We focus on the latent Gaussian model with $\theta^* = ({b^*}^\top, \theta_g^\top)$, and extensions to latent non-Gaussian models follow by additionally conditioning on the mixing variables.

We begin by defining the variational family and relevant symmetry notions. Throughout, we assume $q(\theta^*) = q(b^*)q(\theta_g)$, and let $\Q$ denote the variational family for $q(b^*)$. For the analysis, we assume $p(b^* \mid \theta_g, y)$ is differentiable, $\log p(b^* \mid \theta_g, y)$ and all $q \in \Q$ satisfy regularity conditions permitting differentiating under the integral sign, each $q \in \Q$ has finite moments of all orders, and $\| \nabla_{b^*} \log p(b^* \mid \theta_g, y) \|$ is bounded by a polynomial in $\|b^*\|$.

\begin{definition} \label{defn: location-scale family}
Let $q_0$ be a base pdf on $\mathbb{R}^N$. A location-scale family contains densities of the form $q_{\nu, S}(x) = q_0\left(  S^{-1/2} (x - \nu) \right) |S|^{-1/2}$ $\forall \, x\in \mathbb{R}^N$, for any location $\nu \in \mathbb{R}^N$ and positive definite scale matrix $S \in \mathbb{R}^{N \times N}$. 
\end{definition}

\begin{definition}
A function $f:\mathbb{R}^N \rightarrow \mathbb{R}$ is even-symmetric about $\nu \in \mathbb{R}^N$ if $f(\nu +x) = f(\nu-x)$ $\forall\, x \in \mathbb{R}^N$, and odd-symmetric about $\nu \in \mathbb{R}^N$ if $f(\nu +x) = - f(\nu-x)$ $\forall\, x \in \mathbb{R}^N$.
\end{definition}

\begin{definition}
A pdf $f$ on $\mathbb{R}^N$ is spherically symmetric if $f(x)$ depends on $x$ only through $\|x\|$. That is, if $\|x_1\| = \|x_2\|$, then $f(x_1) = f(x_2)$.
\end{definition}

\begin{definition} \label{defn: elliptical symmetry}
A pdf $f$ on $\mathbb{R}^N$ is elliptically symmetric about $\nu \in \mathbb{R}^N$ if $\exists$ a positive definite matrix $M \in \mathbb{R}^{N \times N}$ such that the pdf of $M^{-1/2}(x - \nu)$ is spherically symmetric for $x \sim f$. Note: $\Cor(x_i, x_j) = M_{ij}/\sqrt{M_{ii} M_{jj}}$, so $M$ determines the correlation matrix.
\end{definition}

Theorem \ref{thm:reflection_symmetry} highlights how KNorm-RVB’s skewness reduction induces $\T_{\G_m, c}$-invariance in $p(b^* \mid \theta_g, y)$. The resulting even-symmetry about $c$ forces the local posterior mean to equal $c$ and the local-global posterior cross-covariance to be exactly zero whenever these moments exist. This is stronger than RVB's normalization, which yields $\tilde{b}$ that is only approximately standard Gaussian. These results motivate the variational factorization $q(\theta^*) = q(b^*)q(\theta_g)$. and the setting of $c$ to be the local variational mean.

\begin{theorem}
\label{thm:reflection_symmetry}
The conditional posterior $p(b^*\mid \theta_g,y)$ is $\T_{\G_m,c}$-invariant and even-symmetric about $c$. This implies that $\E(b^*\mid\theta_g,y) = \E(b^* \mid y) = c$ and $\Cov(b^*,\theta_g\mid y)=0$ provided these moments exist.
\end{theorem}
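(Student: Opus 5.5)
Our plan is to read off $\T_{\G_m,c}$-invariance directly from the explicit density in Theorem \ref{thm: log_joint_density_skew_K}(i) and then deduce the moment statements. Write
\[
p(b^*\mid\theta_g,y)=\frac{1}{K|L|}\sum_{k=1}^K h\{c+S_k(b^*-c)\},\qquad h(z)=p_{b\mid\theta_g,y}(\hat b+L^{-\top}z).
\]
For any $j$, $T_{j,c}(b^*)-c=S_j(b^*-c)$, so evaluating at $T_{j,c}(b^*)$ replaces $S_k$ by $S_kS_j$ in each summand. Since $\G_m$ is a finite group, $k\mapsto S_kS_j$ is a bijection of $\G_m$ onto itself, so the sum is unchanged. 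This gives $p\{T_{j,c}(b^*)\mid\theta_g,y\}=p(b^*\mid\theta_g,y)$ for all $j$. The same holds whatever $\theta_g$ is, because $c$ and $\G_m$ do not depend on $\theta_g$. (The reflection point $c$ is a variational parameter and is treated as fixed here.)

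Even-symmetry about $c$ follows because $-I=S(-1,\dots,-1)\in\G_m$ for every $m\ge1$. Thus $T_{k,c}(b^*)=2c-b^*$ for some $k$, and invariance gives $p(c+x\mid\theta_g,y)=p(c-x\mid\theta_g,y)$.

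For the moments, the conditional mean follows from even-symmetry. If $\E(b^*\mid\theta_g,y)$ exists, the change of variables $x\mapsto-x$ (unit Jacobian) in $\int x\,p(c+x\mid\theta_g,y)\,dx$ shows the integral equals its own negative, hence zero. So $\E(b^*\mid\theta_g,y)=c$ for every $\theta_g$. The tower property then gives $\E(b^*\mid y)=\E\{\E(b^*\mid\theta_g,y)\mid y\}=c$. For the covariance, we write
\[
\Cov(b^*,\theta_g\mid y)=\E\{(\E(b^*\mid\theta_g,y)-c)(\theta_g-\E(\theta_g\mid y))^\top\mid y\},
\]
and the first factor vanishes identically, giving zero.

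The main point requiring care is the integrability bookkeeping: the conditional mean must exist for ($p(\cdot\mid y)$-almost) every $\theta_g$, and Fubini must be valid for the iterated expectation. We handle this by taking the proviso ``provided these moments exist'' to mean that $\E\|b^*\|$, $\E\|\theta_g\|$ and $\E\|b^*\|\|\theta_g\|$ are finite under $p(\cdot\mid y)$. Fubini/Tonelli then makes the conditional mean finite almost surely and justifies both the tower property and the covariance decomposition above.
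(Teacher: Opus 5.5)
Your proposal is correct and follows essentially the same route as the paper's proof. Invariance comes from the averaged density of Theorem~\ref{thm: log_joint_density_skew_K}(i) via the group bijection $S_k\mapsto S_kS_j$. Even-symmetry follows from $-I\in\G_m$, the conditional mean from the change of variables $x\mapsto -x$, and $\E(b^*\mid y)=c$ and the vanishing covariance from the tower property. Your explicit integrability conditions are a slightly more careful rendering of the paper's ``provided these moments exist'': you require finite $\E\|b^*\|$, $\E\|\theta_g\|$ and $\E\|b^*\|\|\theta_g\|$ under $p(\cdot\mid y)$, so that the conditional mean is finite for almost every $\theta_g$.
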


\begin{theorem}[Exact recovery of mean]\label{thm:mean_recovery}
Let $q_{\nu,\lambda}(\theta^*) = q_{\lambda_{b^*}}(b^*-\nu) q_{\lambda_g}(\theta_g)$, where $q_{\lambda_{b^*}}$ is a pdf on $\mathbb R^N$ that is even-symmetric about the origin and $\nu \in \mathbb R^N$ is a location parameter. For any given $\lambda = (\lambda_{b^*}^\top, \lambda_g^\top)^\top$, $\KL\{q_{\nu,\lambda}(\theta^*)\|p(\theta^*\mid y)\}$ has a stationary point at $\nu=c$, which is unique if $\log p(b^*\mid \theta_g,y)$ is concave on $\mathbb R^N$ almost everywhere (a.e.) with respect to $q_{\lambda_g}$ and strictly concave on a nonempty open set with positive probability under $q_{\lambda_g}$.
\end{theorem}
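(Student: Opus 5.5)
The plan is to write the KLD as a function of $\nu$, differentiate under the integral sign, and then use the even-symmetry from Theorem \ref{thm:reflection_symmetry} to show the gradient vanishes at $\nu=c$. For uniqueness, I will show that the objective is strictly convex in $\nu$ under the concavity assumptions.

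\emph{Reduction to the local term.} Decompose
\[
\KL\{q_{\nu,\lambda}(\theta^*)\|p(\theta^*\mid y)\} = \E_{q}\log q_{\lambda_{b^*}}(b^*-\nu) + \E_q \log q_{\lambda_g}(\theta_g) - \E_q\log p(b^*\mid\theta_g,y) - \E_q \log p(\theta_g\mid y).
\]
The entropy of $q_{\lambda_{b^*}}(\cdot-\nu)$ does not depend on $\nu$, by translation invariance. The last three terms also do not depend on $\nu$, except for $-\E_q\log p(b^*\mid\theta_g,y)$. So, up to constants, the objective is
\[
F(\nu) = -\E_{q_{\lambda_g}(\theta_g)}\E_{x\sim q_{\lambda_{b^*}}}\log p(\nu+x\mid\theta_g,y).
\]

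\emph{Stationarity.} Differentiating under the integral sign is allowed by the stated regularity: polynomial growth of the score and finite moments of $q$. This gives
\[
\nabla F(\nu) = -\E_{\theta_g}\E_x \nabla_{b^*}\log p(\nu+x\mid\theta_g,y).
\]
By Theorem \ref{thm:reflection_symmetry}, $p(\cdot\mid\theta_g,y)$ is even-symmetric about $c$. Hence its log-gradient $h(z)=\nabla\log p(z\mid\theta_g,y)$ is odd-symmetric about $c$, since differentiating $p(c+x)=p(c-x)$ gives $h(c+x)=-h(c-x)$. A small care point is where $p=0$; there I would work with $\nabla p/p$ on the support and use a.e. arguments.

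At $\nu=c$, we have $\E_x h(c+x) = \E_x h(c-x)$, because $x$ and $-x$ have the same law when $q_{\lambda_{b^*}}$ is even about $0$. Combining this with oddness, $\E_x h(c+x) = -\E_x h(c+x)$, so it equals $0$. Integrating over $\theta_g$ then gives $\nabla F(c)=0$.

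\emph{Uniqueness.} If $\log p(\cdot\mid\theta_g,y)$ is concave for $q_{\lambda_g}$-a.e. $\theta_g$, then $\nu\mapsto -\E_x\log p(\nu+x\mid\theta_g,y)$ is convex, since it is an average of convex translates. Therefore $F$ is convex.

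For strictness, take $\nu_1\neq\nu_2$. Suppose there is a set of $\theta_g$ with positive $q_{\lambda_g}$-probability on which $\log p$ is strictly concave on an open set $O$. The concavity inequality along the segment between $\nu_1+x$ and $\nu_2+x$ is then strict for all $x$ such that the segment meets $O$.

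The main obstacle is showing that this set of $x$ has positive $q_{\lambda_{b^*}}$-mass. I would first try to argue from the assumption that $q_{\lambda_{b^*}}$ has full support, or at least positive density near the needed region, as holds for Gaussian $q$. If $q$ has restricted support, an extra condition may be needed, and I would flag it. Given positive mass, $F$ is strictly convex, so the stationary point $\nu=c$ is its unique minimizer and unique stationary point.
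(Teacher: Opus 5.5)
Your proposal follows the paper's proof essentially step for step: you reduce the KLD to $-\int q_{\lambda_g}(\theta_g)\int q_{\lambda_{b^*}}(\zeta)\log p_{b^*\mid\theta_g,y}(\nu+\zeta)\,d\zeta\,d\theta_g$ plus a $\nu$-free constant, you make the gradient vanish at $\nu=c$ by pairing the odd score with the even $q_{\lambda_{b^*}}$, and you obtain uniqueness from strict convexity by integrating convex and strictly convex $\theta_g$-sections. The step you flag is exactly the one the paper delegates to Lemma 13 of \cite{Margossian2025a}, and your caveat is justified: it goes through for full-support $q_{\lambda_{b^*}}$ such as the Gaussian used in KNorm-RVB, but take $N=1$, $c=0$, $\log p(x\mid\theta_g,y)=-\{\max(0,|x|-10)\}^2+\text{const}$ and an even $q_{\lambda_{b^*}}$ supported in $[-1,1]$; then the KLD is constant for $|\nu|\le 9$, so positivity of $q_{\lambda_{b^*}}$ everywhere (or a similar condition) is an implicit hypothesis of the uniqueness claim.
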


Theorem \ref{thm:mean_recovery} shows that under a location variational family $\Q$ with even-symmetry, MFVI recovers the mean of $p(b^* \mid \theta_g, y)$ exactly. This supports using a Gaussian variational family for $b^*$. Correlation recovery requires stronger symmetry. Theorem \ref{thm:cor_recovery} shows that under a location-scale variational family $\Q$, MFVI recovers the covariance matrix up to a scale factor and therefore recovers the correlation matrix when $p(b^* \mid \theta_g, y)$ is elliptically symmetric about $c$. The  proofs of Theorems \ref{thm:mean_recovery} and \ref{thm:cor_recovery} follow Theorems 8 and 10 of \cite{Margossian2025a} and rely on their Lemma 13.

\begin{theorem}[Exact recovery of correlation matrix] \label{thm:cor_recovery}
Let $q_{\nu, S, \lambda_g}(\theta^*) = q_{\nu, S}(b^*) q_{\lambda_g}(\theta_g) $, where $\Q = \{q_{\nu, S}\}$ is the location-scale family (Definition \ref{defn: location-scale family}) with a spherically symmetric base pdf $q_0$. If $p(b^*\mid\theta_g,y)$ is elliptically symmetric about $c$ with fixed scale matrix $M$ a.e. with respect to $q_{\lambda_g}$, and $\log p(b^*\mid \theta_g,y)$ is concave on $\mathbb R^N$ a.e. with respect to $q_{\lambda_g}$ and strictly concave on a nonempty open set with positive probability under $q_{\lambda_g}$, then $\KL\{ q_{\nu, S, \lambda_g}(\theta^*) \| p(\theta^*\mid y)\}$ has a unique minimizer at $\nu = c$ and $S = \gamma^2 M$ for some $\gamma>0$ for any given $\lambda_g$. 
\end{theorem}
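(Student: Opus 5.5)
Fix $\lambda_g$ and reduce the problem to the local factor. Write $p(\theta^*\mid y)=p(b^*\mid\theta_g,y)\,p(\theta_g\mid y)$. Then
\[
\KL\{q_{\nu,S,\lambda_g}\|p(\theta^*\mid y)\} = -\mathcal{H}(q_{\nu,S}) - \E_{q_{\lambda_g}}\E_{q_{\nu,S}}\{\log p(b^*\mid\theta_g,y)\} + \text{const},
\]
where the constant depends only on $\lambda_g$. Since $\mathcal{H}(q_{\nu,S}) = \mathcal{H}(q_0)+\tfrac12\log|S|$, the objective in $(\nu,S)$ is
\[
F(\nu,S) = -\tfrac12\log|S| - \E_{q_0(u)}\bar\ell\bigl(\nu+S^{1/2}u\bigr), \qquad \bar\ell(x)=\E_{q_{\lambda_g}}\log p(x\mid\theta_g,y).
\]
By elliptical symmetry with fixed $M$, each $p(\cdot\mid\theta_g,y)$ has the form $|M|^{-1/2}h_{\theta_g}(\|M^{-1/2}(x-c)\|)$. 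Hence $\bar\ell(x)=\bar h(\|M^{-1/2}(x-c)\|)$ depends on $x$ only through the Mahalanobis radius. The concavity hypotheses imply that $\bar\ell$ is concave, and strictly concave on an open set. By Theorem \ref{thm:reflection_symmetry}, or directly from ellipticity, $\bar\ell$ is even-symmetric about $c$.

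\textbf{Existence and location of the minimizer.} I would argue along the lines of Theorem 10 of \cite{Margossian2025a}. First, $F$ is jointly convex in $(\nu,S^{1/2})$ when we parametrize by a symmetric positive definite square root $A=S^{1/2}$. The term $-\log|A|$ is convex, and $(\nu,A)\mapsto -\E\,\bar\ell(\nu+Au)$ is convex because it is an expectation of a convex function composed with a map linear in $(\nu,A)$. Strict concavity of $\bar\ell$ on an open set, together with $q_0$ having full support, should make this strictly convex. I would invoke Lemma 13 there for the technical part, which uses the polynomial gradient bound and finite moments to justify differentiating under the integral. Coercivity as $|A|\to 0$ or $\infty$ follows from the $-\log|A|$ term and the concavity of $\bar\ell$. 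Together these give a unique minimizer. Strict convexity in $\nu$ for fixed $A$ would also re-derive the location result of Theorem \ref{thm:mean_recovery}. Second, I would identify the minimizer by symmetry.
\begin{itemize}
\item \emph{Invariance under reflection.} Because $q_0$ is spherical and $\bar\ell$ is even about $c$, $F(2c-\nu,A)=F(\nu,A)$. By uniqueness, $\nu=c$.
\item \emph{Invariance under $M$-orthogonal maps.} Let $R$ be any orthogonal matrix and set $A' = M^{1/2}RM^{-1/2}A$, replacing $A$ by the symmetric positive definite square root of $A'A'^\top$. This leaves $F(c,\cdot)$ unchanged. The reason is that $\bar\ell(c+A'u)$ depends only on $\|RM^{-1/2}Au\|=\|M^{-1/2}Au\|$ in the $u$-variable, and $q_0$ being spherical lets us absorb any orthogonal factor on the right of $A$. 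So $F(c,S)$ depends on $S$ only through $S\mapsto M^{1/2}RM^{-1/2}SM^{-1/2}R^\top M^{1/2}$.
\item \emph{Conclusion for $S$.} Uniqueness forces $M^{-1/2}SM^{-1/2}$ to commute with every orthogonal $R$. Hence it is a positive multiple of $I$, i.e.\ $S=\gamma^2 M$.
\end{itemize}

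\textbf{Main obstacle.} I expect the hardest step to be rigorously establishing strict joint convexity and existence of the minimizer over the scale parameter. The delicate points are handling the non-uniqueness of square roots, which I would resolve by working with $S$ and using that $F$ depends on $A$ only through $AA^\top$, and securing coercivity. I would try to import Lemma 13 of \cite{Margossian2025a} directly, checking that averaging over $q_{\lambda_g}$ preserves its hypotheses: almost-everywhere concavity, and strict concavity on an open set with positive $q_{\lambda_g}$-probability.
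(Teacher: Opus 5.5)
Your proposal is correct, but it identifies the minimizer by a different mechanism from the paper's.

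\textbf{The two routes.} Both arguments reach the same reduced objective, $-\tfrac12\log|S|-\E_{q_{\lambda_g}}\E_{q_0}\log p(\nu+S^{1/2}u\mid\theta_g,y)$.

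The paper then proceeds directly:
\begin{enumerate}
\item It fixes $\nu=c$ by rerunning the strict-convexity argument of Theorem~\ref{thm:mean_recovery}.
\item It substitutes the elliptical form, sets $J=M^{-1/2}S^{1/2}$, differentiates, and checks that $J=\gamma I$ is a stationary point.
\item For this it imports two facts from the proof of Theorem 10 of \cite{Margossian2025a}: the matrix integral in the gradient is a multiple of $I$ at $J=\gamma I$, and the resulting radial integral is positive and increasing in $\gamma$. Hence the trace equation $N/\gamma=\cdots$ has a unique root.
\item Convexity then makes this stationary point the unique global minimizer.
\end{enumerate}

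You instead get existence and uniqueness abstractly, via joint convexity in $(\nu,A)$ over symmetric positive definite $A=S^{1/2}$, strictness, and coercivity. You then read the minimizer off from invariance under $\nu\mapsto 2c-\nu$ and under $T\mapsto RTR^\top$ for $T=M^{-1/2}SM^{-1/2}$. A symmetric matrix fixed by every orthogonal conjugation is scalar, which gives $S=\gamma^2M$.

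\textbf{What each route buys.} Yours avoids the gradient formula, the spherical-coordinate integral and the monotonicity lemma. Working with symmetric $A$ also makes the convex domain explicit, which the paper leaves implicit when it differentiates in an unconstrained $J$. It clarifies the role of the hypotheses too. Strict convexity in $A$ comes for free from $-\log|A|$, so strict concavity of $\log p$ is needed only for uniqueness in $\nu$. For that step you assume $q_0$ has full support, a condition the statement leaves implicit. What the paper's route gives in addition is an explicit scalar equation determining $\gamma$.

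\textbf{What you still have to supply: existence.} You must prove that a minimizer exists, which the paper gets for free by exhibiting the root $\gamma$. Your coercivity sketch is slightly too quick. Concavity of $\bar\ell$ alone does not give growth as $\lambda_{\max}(A)\to\infty$, since a constant is concave. You need a bound $-\bar\ell(x)\ge a\|x-c\|-b$ with $a>0$, which does hold:
\begin{itemize}
\item By Jensen, $e^{\bar\ell}\le\E_{q_{\lambda_g}}p(\cdot\mid\theta_g,y)$ is integrable, and integrable log-concave functions decay exponentially.
\item By the symmetry of $q_0$ and the triangle inequality, $\E_{q_0}\|\nu-c+Au\|\ge\max\{\|\nu-c\|,\E_{q_0}\|Au\|\}$. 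So the linear growth beats $-\log|A|\ge -N\log\lambda_{\max}(A)$ as $\|\nu\|\to\infty$ or $\lambda_{\max}(A)\to\infty$.
\item The term $-\log|A|\to\infty$ handles $\lambda_{\min}(A)\to0$.
\item The same bound gives $-\bar\ell\ge -b$, so Fatou yields lower semicontinuity and the minimum is attained.
\end{itemize}
With this filled in, your argument is complete.
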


By Theorem \ref{thm:reflection_symmetry}, KNorm-RVB's skewness reduction enforces even-symmetry about $c$ in $p(b^* \mid \theta_g, y)$. This suffices for MFVI to  recover the mean exactly when $\Q$ is restricted to even-symmetric densities (Theorem \ref{thm:mean_recovery}). However, exact recovery of the correlation matrix requires the stronger property that $p(b^* \mid \theta_g, y)$ be elliptically symmetric about $c$ for a location-scale family $\Q$ (Theorem \ref{thm:cor_recovery}), which KNorm-RVB does not guarantee. Nevertheless, KNorm-RVB's normalization and skewness reduction tend to steer the distribution of $b^*$ toward $\N(c, I)$, and thus toward elliptical symmetry. Theorem \ref{thm: log_joint_density_skew_K}(iii) further suggests that this Gaussianizing effect strengthens as the blockwise reflection group grows. Thus, when $p(b^* \mid \theta_g, y)$ is close to $\N(c,I)$, a Gaussian approximation centered at $c$ recovers the true center and can approximate the correlation structure well. Remaining discrepancy may be due to residual scale mismatch, tail behavior, or higher-order departures from Gaussianity.

\section{Applications}\label{sec:app}
We evaluate Algorithm \ref{Algs:elbo_opti} on GLMMs, mixed multinomial logit (MMNL) models, SAR models, and stochastic volatility models, covering both Gaussian and non-Gaussian latent fields. We compare KNorm-RVB (normalization plus skewness reduction) with RVB (normalization only), INLA for latent Gaussian models, structured variational inference \citep[SVI,][]{Cabral2024} for latent non-Gaussian models, and GLOSS-VA for latent Gaussian models with conditionally independent local variables.

For RVB and KNorm-RVB, the suffix -G or -CSN indicates whether the variational family for non-local variables is Gaussian or from the CSN subclass. For KNorm-RVB, the blockwise reflection group size $K$ is shown in parentheses. For GLMMs and MMNL models, the conditional local posterior factorizes, so $K=2^n$ unless stated otherwise.

We run RVB and KNorm-RVB for 50,000 iterations, using Adam \citep{Kingma2015} with default settings. For a warm start, CSN variational parameters are initialized from a fitted Gaussian variational approximation. As a gold standard, we run MCMC using the No-U-Turn Sampler in RStan using two parallel chains of 50,000 iterations each, discarding the first half of each chain as burn-in. All experiments are run on a 16GB Apple M1 machine using R and Python 3.14.3 with JAX 0.92. 

Multivariate accuracy relative to MCMC is assessed by the maximum mean discrepancy (MMD) criterion \citep{Zhou2023}. We define $M^*=-\log\{\max(\mathrm{MMD}_u^2, 0)+10^{-5}\}$, where
\begin{equation*}
\begin{aligned}
\mathrm{MMD}_u^2
= \frac{1}{m(m-1)} \sum_{i\neq j}^{m}
\Bigl[
k\!\left(\mathbf{x}_v^{(i)},\mathbf{x}_v^{(j)}\right)
+ k\!\left(\mathbf{x}_{g}^{(i)},\mathbf{x}_{g}^{(j)}\right)
- k\!\left(\mathbf{x}_v^{(i)},\mathbf{x}_{g}^{(j)}\right)
- k\!\left(\mathbf{x}_v^{(j)},\mathbf{x}_{g}^{(i)}\right)
\Bigr]
\end{aligned}
\end{equation*}
is an unbiased estimate of the squared MMD. Here $\mathbf{x}_v^{(i)}$ and $\mathbf{x}_g^{(i)}$ denote draws from the variational approximation and MCMC respectively, and $k(\cdot, \cdot)$ is the radial basis function kernel. We set $m=1000$ and compute $M^*$ over 50 repetitions, with larger values indicating closer agreement with the target. We do not report $M^*$ for SVI because INLA would need to refit the model and generate $(\theta_g, b)$ draws for each sampled $v$, which is computationally intensive. Alternatively, we report two marginal diagnostics, the absolute difference in posterior means and ratio of posterior standard deviations, both relative to MCMC. The mean Monte Carlo ELBO estimate, $\bar{\L}$, computed from 1000 simulations, also serves as a diagnostic for how well different variational methods approximate the true posterior.

\subsection{Generalized linear mixed model}
Let $y_i = (y_{i1},\dots,y_{in_i})^\top$ denote the observations for subject $i$, $i=1, \dots, n$. In a GLMM, $y_{ij}$ follows an exponential family distribution, and $g(\E(y_{ij})) = \eta_{ij} =X_{ij}^\top \beta + Z_{ij}^\top b_i$, where $g(\cdot)$ is a link function, $X_{ij} \in \mathbb{R}^p$ and $Z_{ij} \in \mathbbm{R}^r$ are covariates, $\beta \in \mathbb{R}^p$ are fixed effects and $b_i \overset{iid}{\sim} \N(0, B^{-1})$ are random effects. Stacking $b = (b_1^\top, \dots, b_n^\top)^\top$ gives $b \sim \N(0,Q^{-1})$ with $Q = \blockdiag(B, \dots, B)$. For unconstrained optimization, we use the Cholesky factorization $B = L_B L_B^\top$, where $L_B$ is lower triangular with positive diagonal. Let $L_B^*$ satisfy $L_{B, ii}^* = \log L_{B, ii}$ and $L_{B, ij}^* = L_{B, ij}$ for $i\neq j$, and define $\zeta = \vech(L_B^*)$. The global variables are $\theta_g=(\beta^\top,\zeta^\top)^\top$, with prior $\N(0,100I)$. For the latent non-Gaussian extension, $Q = D^\top D$ with $D = \blockdiag(L_B^\top, \dots, L_B^\top)$.

The polypharmacy data set \citep{Hosmer2013} follows 500 subjects annually for 7 years, yielding 7 binary drug-use responses per subject. Covariates include gender (male = 1, female = 0), race (non-White = 1, White = 0), log(age/10), outpatient mental health visit indicators (MHV1 = 1 for 1--5 visits, MHV2 = 1 for 6--14 visits, MHV3 = 1 for $\geq$ 15 visits; 0 otherwise), and an indicator for any inpatient mental health visits (yes=1, no=0). We fit a logistic random intercept model and consider KNorm-RVB with $K \in \{2, 2^n\}$, corresponding to the standard and subjectwise skew-symmetric representations.

\begin{figure}[tb!]
\centering
\includegraphics[width=\textwidth]{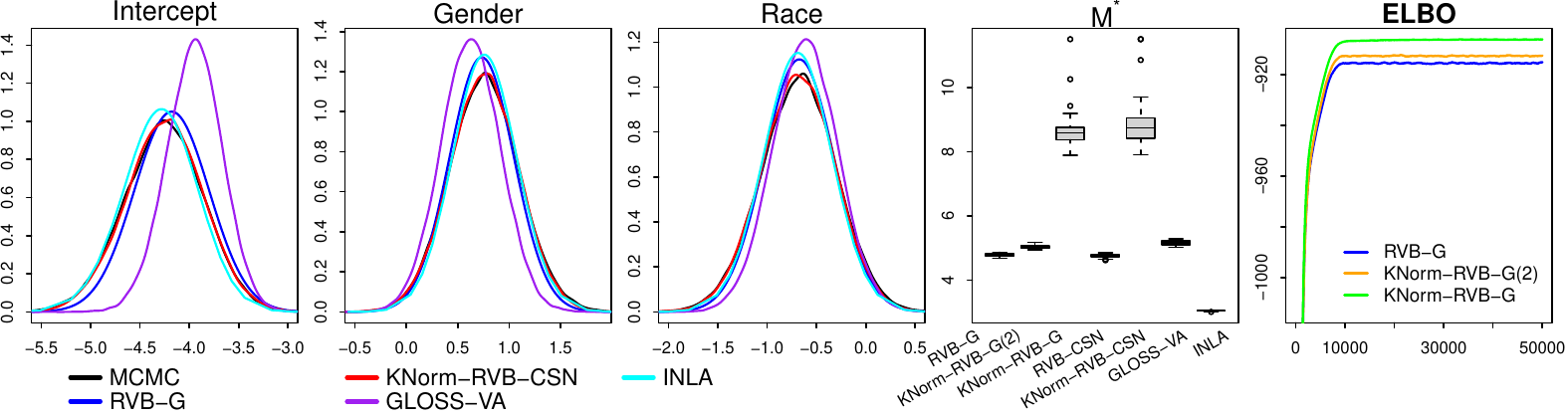}
\caption{Polypharmacy latent Gaussian model. Marginal posteriors for selected global variables, boxplot for $M^*$ and ELBO averaged over every 1000 iterations.} \label{Fig: poly_lgm}
\end{figure}

Figure \ref{Fig: poly_lgm} summarizes results for the latent Gaussian model. In the first three marginal posterior plots, KNorm-RVB-CSN matches the MCMC benchmark most closely, improving on INLA, RVB-G and GLOSS-VA. The boxplots of $M^*$ show that KNorm-RVB improves slightly on RVB when $K=2$ and substantially when $K=2^n$. Replacing the Gaussian variational approximation for the global variables with the CSN subclass yields a further modest gain in $M^*$. Overall, KNorm-RVB-G and KNorm-RVB-CSN achieve markedly higher $M^*$ than GLOSS-VA and INLA. The accuracy gains from RVB to KNorm-RVB by increasing $K$ from 1 to 2 to $2^n$ are also reflected in the ELBO trace plots for the Gaussian variant.

For the latent non-Gaussian model, the first two panels of Figure \ref{Fig: poly_logV_lgm} show that the normalization used in RVB and KNorm-RVB, which renders the transformed local variables approximately independent of the mixing variables $v$, improves estimation of the marginal posterior mean, and especially the variance, of $\log(v)$ relative to SVI. Increasing the blockwise reflection group size from $K$ from 2 to $2^n$ yields further accuracy gains, which are also evident in the $M^*$ and ELBO plots. Overall, KNorm-RVB-CSN attains the highest $M^*$, with KNorm-RVB-G a close second, highlighting the benefit of combining normalization with skewness reduction.

\begin{figure}[tb!]
\centering
\includegraphics[width=\textwidth]{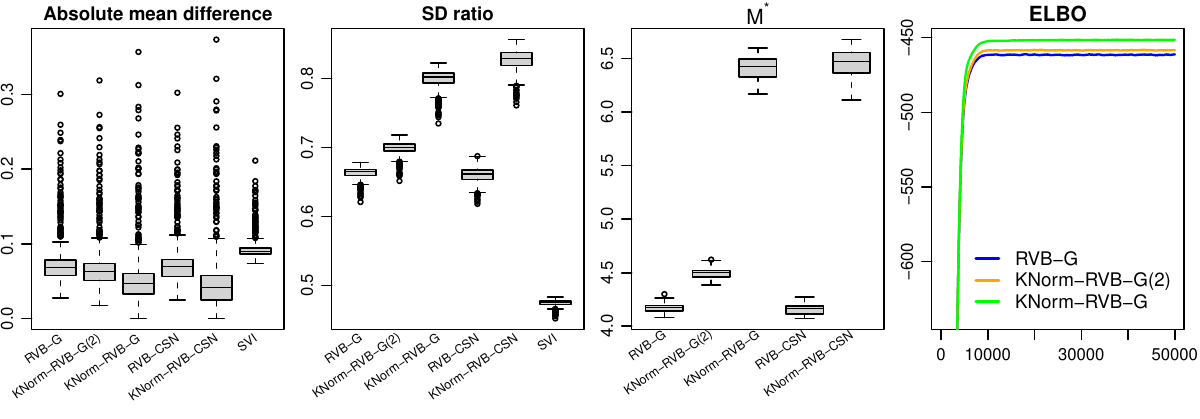}
\caption{Polypharmacy latent non-Gaussian model. Boxplots of absolute mean difference, standard deviation ratios for $\log(v)$, and ELBO averaged over every 1000 iterations.} \label{Fig: poly_logV_lgm}
\end{figure}

\subsection{Mixed multinomial logit model}
Let $y_{it}\in\{1,\dots,J\}$ denote the alternative chosen by subject $i$ on choice occasion $t$ for $i=1, \dots, n$ and $t=1, \dots, T$. In the MMNL model, $\P(y_{it}= j \mid b_i,\beta)=\exp(U_{itj})/\sum_{k=1}^{J}\exp(U_{itj})$, with $U_{itj}=X_{itj}^\top\beta+Z_{itj}^\top b_i$, where $\beta$ are fixed effects and $b_i$ are subject-specific random effects. We consider the latent Gaussian model with $b_i\overset{iid}{\sim}\N(0,B^{-1})$ with the prior by \cite{Huang2013}, where $\beta\sim\N(0,v_0 I)$, $B^{-1}\mid a_1,\dots,a_r\sim \mathrm{IW}\{\nu+r-1,2\nu\diag(a_1,\dots,a_r)\}$ and $a_\ell\sim\mathrm{Gamma}(1/2,1/A^2)$ for $\ell=1,\dots,r$. Here, $\mathrm{IW}(\cdot, \cdot)$ denotes the inverse Wishart distribution, and we set $v_0=10^6$, $\nu=2$ and $A=10^3$. For unconstrained optimization, write $B=L_B L_B^\top$ and define $\zeta=\vech(L_B^*)$, where $L^*_{B,ii}=\log L_{B,ii}$ and $L^*_{B,ij}=L_{B,ij}$ for $i\neq j$. The global variables are $\theta_g=(\beta^\top,\zeta^\top,\log a_1,\dots,\log a_r)^\top$.

We analyze the Electricity data from the \texttt{mlogit} R package \citep[see e.g.,][]{Tan2017b}, comprising $n=361$ residential customers and $T=4308$ choice occasions in total. For each occasion $t$, the response $y_{it}$ records the selected supplier among $J=4$ alternatives. Supplier attributes include fixed price, contract length, indicators for local and well-known suppliers, and time-of-day and seasonal rates. We include all attributes as fixed effects in the MMNL model, and allow a random effect for fixed price to capture heterogeneous price sensitivity. Since the conditional local posterior factorizes by respondent, we adopt a subjectwise reflection group with $K=2^n$.

\begin{figure}[tb!]
\centering
\includegraphics[width=\textwidth]{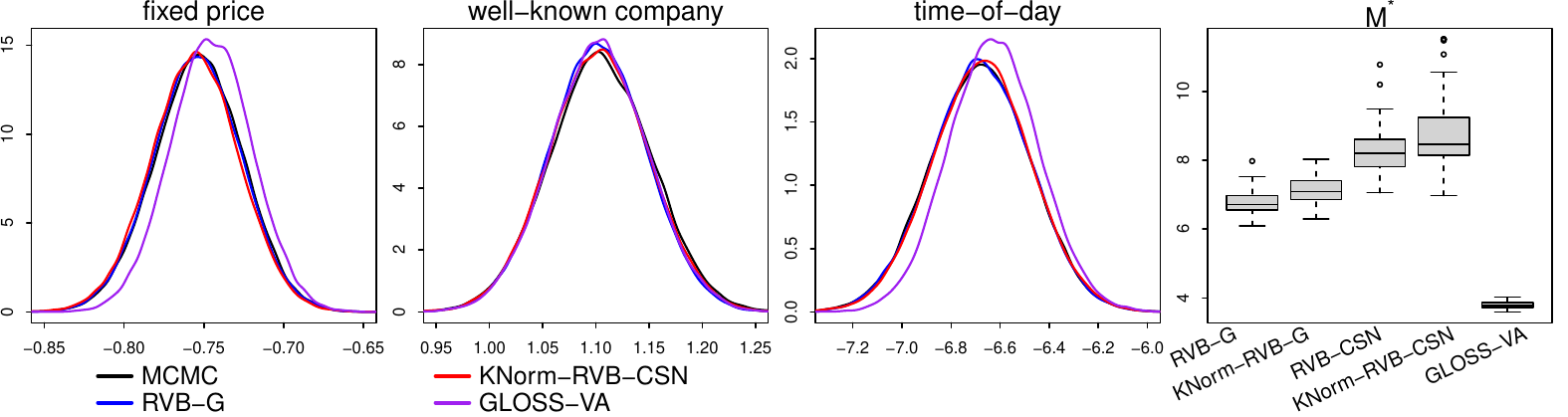}
\caption{\small Latent Gaussian model for electricity data. Marginal posteriors for selected global variables, boxplot for $M^*$.} \label{Fig: ele_lgm}
\end{figure}

For the latent Gaussian model, we compare RVB-G, RVB-CSN, KNorm-RVB-G and KNorm-RVB-CSN against MCMC, and include GLOSS-VA as an additional benchmark. We omit INLA because applying it to the MMNL likelihood will require a Poisson-trick representation \citep{Baker1994}, which expands the data into alternative-level pseudo-observations and is substantially slower than the variational methods. Figure \ref{Fig: ele_lgm} shows that RVB-G and KNorm-RVB-CSN tack the MCMC marginal posteriors more closely than GLOSS-VA. The $M^*$ results are consistent with these plots, showing that RVB and KNorm-RVB outperform GLOSS-VA, and KNorm-RVB improves upon RVB for both the Gaussian and CSN subclass variants, reflecting the benefit of the additional skewness reduction step.

Table \ref{tab:elbo} shows that, for the latent non-Gaussian model, KNorm-RVB achieves higher mean Monte Carlo ELBO $\bar{\L}$ than RVB for both the Gaussian and CSN subclass variants. We do not report accuracy relative to MCMC because obtaining a reliable MCMC benchmark for this model is substantially more challenging and computationally intensive.

\begin{table}[tb!]
\centering
\begin{tabular}{lrrrr}
\hline
Data & RVB-G    & KNorm-RVB-G & RVB-CSN  & KNorm-RVB-CSN \\ \hline
Electricity & $-3951.02$ & $-3949.57$ & $-3950.91$ & $-3949.51$ \\ 
SIDS & 1044.24 & 1044.84         & 1048.63 & 1049.34           \\ 
GBP  & 1109.87 & 1110.08         & 1109.98 & 1110.17           \\ \hline
\end{tabular}
\caption{Estimated ELBO, averaged over 1000 Monte Carlo simulations ($\bar{\L}$) for the latent non-Gaussian models. KNorm-RVB uses the largest $K$ considered in each example.} 
\label{tab:elbo}
\end{table}

\subsection{Spatial autoregressive model}
Here we analyze the sudden infant death syndrome (SIDS) data \citep{Cressie2015}, comprising the counts $\{y_i\}$ of SIDS across the $n=100$ counties of North Carolina (1974--1978). We model $y_i\mid \eta_i \sim \Poisson(E_i\eta_i)$, where $E_i$ is the expected count (overall SIDS rate times births in county $i$) and $\log(\eta_i) = X_i^\top \beta + b_i$ for $i=1,\dots,n$. The covariates $X_i$ include an intercept and the proportion of non-White births (standardized to have mean 0 and variance 1). Here $\beta$ are fixed effects and $b_i$ is a latent spatial effect. For an autoregressive prior \citep{Kissling2008,Ver2018} on $b$, let $W$ be the row-standardized adjacency matrix (each entry is divided by its row sum) and define $D = I_n-\rho W$ with $\rho\in(-1,1)$. In the latent Gaussian case, $D b = Z$ with $Z \sim \N(0, I)$, implying $b\sim\N(0, Q^{-1})$ where $Q = D^\top D$. We reparametrize $\rho=\tanh(\phi)$ and assign the prior $\theta_g = (\beta^\top,\phi)^\top\sim\N(0,100I)$. The latent non-Gaussian extension replaces the Gaussian driving noise $Z$ by $\epsilon$ whose components are independent NIG as in Section \ref{sec:lngm}.

\begin{figure}[tb!]
\centering
\includegraphics[width=\textwidth]{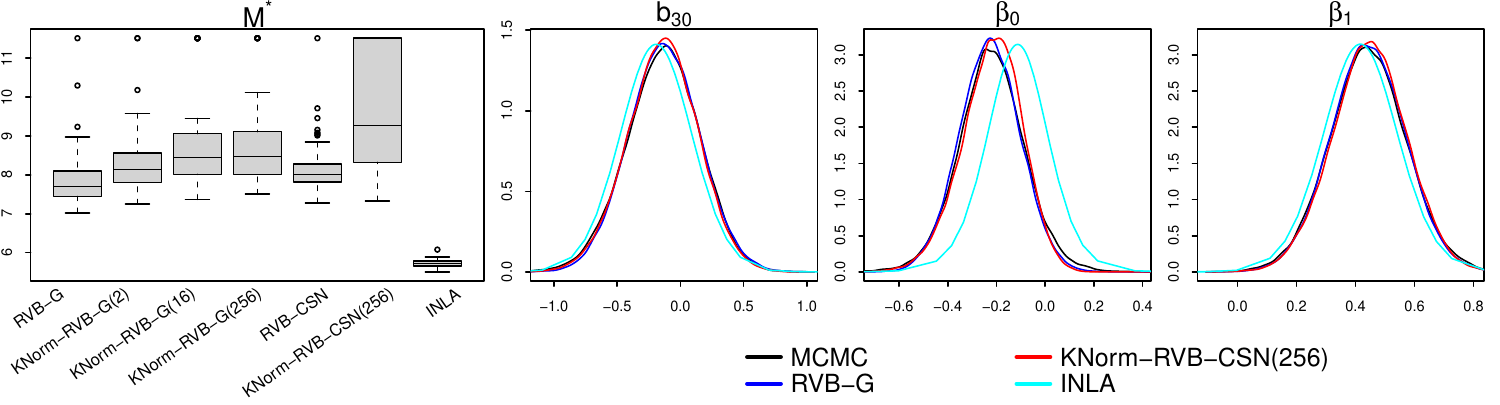}
\caption{SIDS latent Gaussian model: Boxplots of $M^*$ and marginal posteriors for $b_{30}$, $\beta_{0}$ and $\beta_{1}$.} \label{Fig: sids_lgm}
\end{figure}

Figure \ref{Fig: sids_lgm} shows that KNorm-RVB-G (2) achieves a higher $M^*$ than RVB-G, and that $M^*$ increases further as the blockwise reflection group size grows from $K=2$ to $K=256$ for the latent Gaussian model. This suggests improved variational approximation under richer blockwise symmetrization, consistent with Theorem \ref{thm: log_joint_density_skew_K}(iii). Moreover, all RVB and KNorm-RVB methods outperform INLA in terms of $M^*$, indicating better multivariate accuracy. The marginal posterior plots of $b_{30}$, $\beta_{0}$ and $\beta_{1}$ also show that RVB-G and KNorm-RVB-CSN (256) are closer to MCMC than INLA.

\begin{figure}[b!]
\centering
\includegraphics[width=\textwidth]{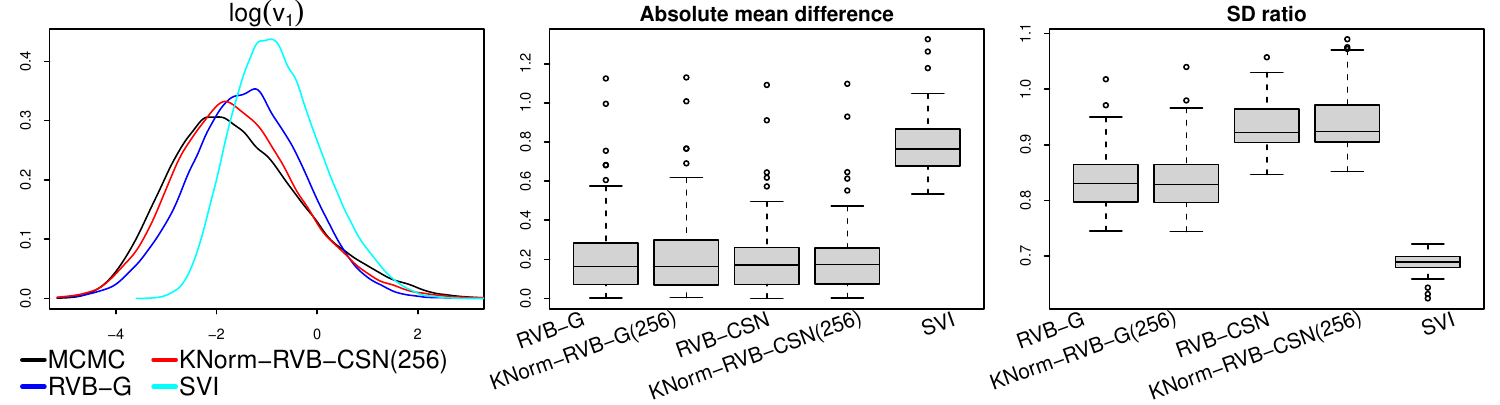}
\caption{\small SIDS latent non-Gaussian model: Marginal distributions of the mixing variable $\log(v_{1})$, together with boxplots of the absolute mean differences, standard deviation ratios for $\log(v)$.} \label{Fig: sids_lngm}
\end{figure}

For the latent non-Gaussian model, we report only the best-performing KNorm-RVB variant with $K=256$. The first panel in Figure \ref{Fig: sids_lngm} shows that RVB-G and KNorm-RVB-CSN (256) track the marginal posterior $\log(v_1)$ more closely than INLA, and that the CSN variant improves on the Gaussian approximation by capturing posterior skewness. The absolute mean differences and standard deviation ratios for $\log(v)$ indicate that RVB and KNorm RVB outperform SVI, producing posterior means and standard deviations closer to MCMC. Table \ref{tab:elbo} also shows that KNorm-RVB achieves higher $\bar{\L}$ values than RVB for both the Gaussian and CSN subclass, suggesting that the addition of the skewness reduction step improves the overall  variational fit.

\subsection{Stochastic volatility model}
In the stochastic volatility model, which is widely used for financial time series, the observations are conditionally Gaussian with zero mean and time-varying variance: $y_t \mid b_t \sim \N \left( 0,\exp(\lambda+\sigma b_t) \right)$ for $t = 1, \dots, n$, where $\lambda\in\mathbb{R}$ and $\sigma>0$. The latent log-volatility $\{b_t\}$ follows a stationary AR(1) process, with $b_1\sim\N (0, (1-\phi^2)^{-1} )$,
\[
b_t\mid b_{t-1}\sim\N(\phi\, b_{t-1},1),\quad t=2,\dots,n,
\]
and persistence parameter $0 < \phi <1 $. Hence $b = (b_1, \dots, b_n)^\top \sim \N(0, Q^{-1})$, where $Q$ is symmetric tridiagonal, with diagonal $(1, 1+\phi^2, \dots, 1+\phi^2, 1)^\top$ and off-diagonal entries $-\phi$. For unconstrained optimization, we reparametrize $\sigma=\log(1+e^\alpha)$ and $\phi=\{1+\exp(-\psi)\}^{-1}$, and collect the global variables as $\theta_g = (\alpha,\lambda,\psi)^\top$, with prior $\theta_g\sim\N(0,10I)$. For the latent non-Gaussian extension, $Q = D^\top D$ where $D$ is lower triangular, with diagonal $(\sqrt{1-\phi^2}, 1, \dots, 1)^\top$, first lower diagonal $(-\phi, \dots, -\phi)^\top$, and all remaining entries zero.

We consider the \texttt{Garch} dataset from the R package \texttt{Ecdat}, containing daily USD/GBP exchange rates from 1 October 1981 to 15 March 1984. For exchange-rate series $\{r_t\}$, define $y_t
=100\left\{ \log (r_t/r_{t-1}) - \frac{1}{n} \sum_{i=1}^n \log (r_i / r_{i-1}) \right\}$ for $t=1,\dots,n$, where $n=619$.

\begin{figure}[tb!]
\centering
\includegraphics[width=\textwidth]{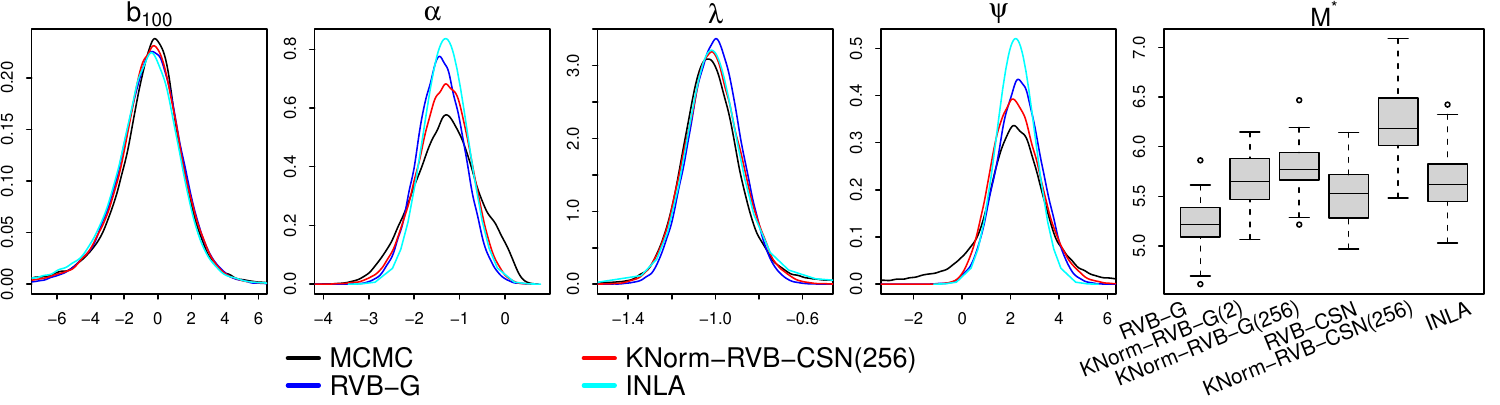}
\caption{GBP latent Gaussian model. Marginal posteriors for $b_{100}$ and $\theta_g$, and boxplots of $M^*$.} \label{Fig: gbp_lgm}
\end{figure}

For the latent Gaussian model, we consider KNorm-RVB with $K=2$ and $K=256$. Figure \ref{Fig: gbp_lgm} shows that KNorm-RVB-CSN(256) returns marginal posterior estimates for $b_{100}$ and the global parameters that are closer to MCMC than INLA and RVB-G. The $M^*$ results further indicate that KNorm-RVB improves upon RVB for both the Gaussian and CSN subclass. In particular, KNorm-RVB-CSN(256) attains a substantially higher $M^*$ than other competing methods, demonstrating the benefits of applying skewness reduction after normalizing the local variables, and highlighting the added flexibility of the CSN subclass.

\begin{figure}[b!]
\centering
\includegraphics[width=\textwidth]{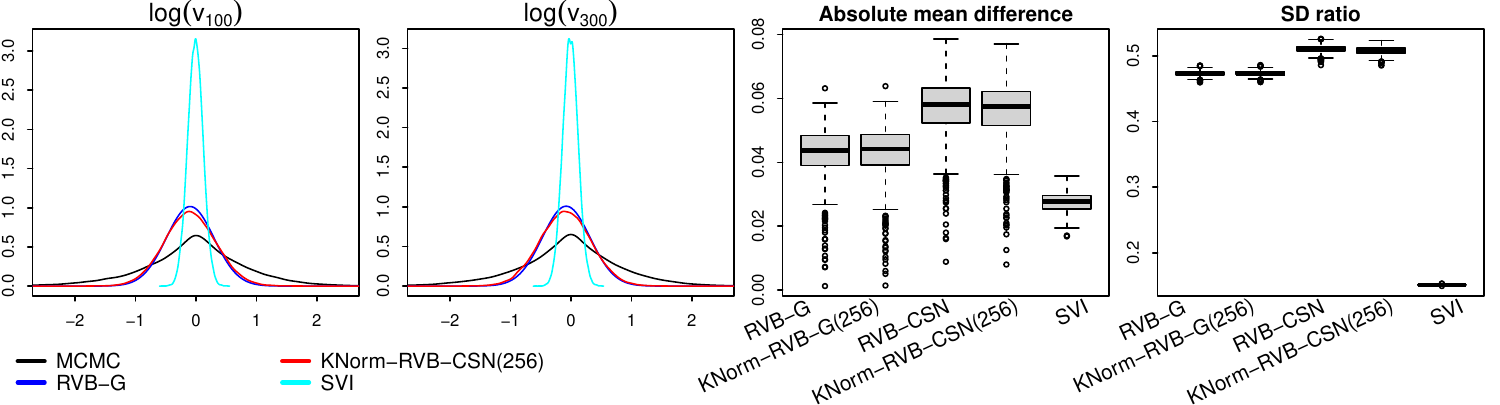}
\caption{GBP latent non-Gaussian model: Marginal distributions of $\log(v_{100})$ and $\log(v_{300})$, and boxplots of absolute mean differences and standard deviation ratios for $\log(v)$.} \label{Fig: gbp_lngm}
\end{figure}

For the latent non-Gaussian model, the marginal posterior plots for $\log(v_{100})$ and $\log(v_{300})$ in Figure \ref{Fig: gbp_lngm} show that RVB and KNorm-RVB markedly outperform SVI, which severely underestimates the posterior variance. For $\log(v)$, all methods yield small absolute mean differences, with SVI performing slightly better on this metric. However, RVB and KNorm-RVB deliver more accurate standard deviation ratios. Table \ref{tab:elbo} further shows that KNorm-RVB achieves higher ELBO values than RVB for both the Gaussian and CSN subclass, with the best result achieved by KNorm-RVB-CSN (256).

\section{Conclusion}\label{sec:conclusion}
This article develops KNorm-RVB, a reparametrization method that improves MFVI for hierarchical models. It generalizes RVB beyond GLMMs to latent Gaussian models with sparse local dependence, and latent non-Gaussian models with heavy-tailed noise. KNorm-RVB combines curvature-based normalization which makes local conditional posteriors approximately standard Gaussian, with skewness reduction via a novel $K$-component skew-symmetric representation. By enabling simulation from a blockwise symmetrized density, we derive an analytic reparametrized density, and show that Gaussianization strengthens as the blockwise reflection group grows. Our theory quantifies gains over RVB by proving that transformed local variables are centered at the reflection point and have zero posterior covariance with global variables. We also show that, under symmetry conditions on the variational family and conditional local posterior, MFVI recovers the exact local mean and correlation matrix. Although KNorm-RVB enforces even-symmetry rather than full elliptical symmetry, these results motivate normalization and symmetrization before MFVI, Gaussian approximations for transformed local variables, and setting the reflection point to the variational mean. Flexible CSN variational families are used for global and mixing variables, where skewness may remain. Across a range of models, KNorm-RVB improves posterior approximation accuracy over existing methods.

Our results suggest several promising directions for future work. The current reflection groups for skewness reduction exclude rotations, and incorporating rotation groups \citep[e.g.][]{Jupp2016} may further improve Gaussianization. Rather than selecting blockwise reflection partitions based on model structure and empirical performance, an adaptive rule driven by estimated local skewness may also improve efficiency. To scale KNorm-RVB to large datasets, the Taylor expansion centers used in Gaussian approximation of local conditional posteriors can be treated as auxiliary variational parameters which are updated infrequently without optimization, as in \citet{Zhang2026}. Since KNorm-RVB  primarily targets skewness, combining it with methods that address residual tail behavior may yield further gains. The $K$-component skew-symmetric representation may also be useful for constructing variational approximations directly. Finally, establishing convergence guarantees for the stochastic optimization algorithm in the spirit of \cite{Ko2024} remains an important open problem.

\section{Disclosure statement}\label{disclosure-statement}
There are no competing interests to declare.

\phantomsection\label{supplementary-material}
\bigskip
\begin{center}
{\large\bf SUPPLEMENTARY MATERIAL}
\end{center}

\begin{description}
\item[Technical supplement:]
The proofs of all propositions and theorems are provided in the supplementary material.
\end{description}


\begingroup
\spacingset{1}
\bibliography{ref}
\endgroup

\newpage

\noindent
\begin{center}
{ \bf \Large Supplementary Material}
\end{center}

\setcounter{section}{0} \renewcommand{\thesection}{S\arabic{section}}
\setcounter{figure}{0} \renewcommand{\thefigure}{S\arabic{figure}}
\setcounter{table}{0} \renewcommand{\thetable}{S\arabic{table}}
\setcounter{equation}{0} \renewcommand{\theequation}{S\arabic{equation}}
\setcounter{lemma}{0} \renewcommand{\thelemma}{S\arabic{lemma}}
\setcounter{theorem}{0} \renewcommand{\thetheorem}{S\arabic{theorem}}
\setcounter{proposition}{0} \renewcommand{\theproposition}{S\arabic{proposition}}
\setcounter{algorithm}{0} \renewcommand{\thealgorithm}{S\arabic{algorithm}}

\section{Proofs for Section \ref{sec:skewness_correction}}
In this section, we present the proofs of Propositions \ref{prop:K_skew_construction}, \ref{prop:K_skew_representation} and \ref{prop: K_skew2symm} for the construction, representation and stochastic representations of the $K$-component skew-symmetric density. We also provide the proofs of Theorems \ref{thm:K_optimal_skew_KL} and \ref{thm: log_joint_density_skew_K} on the optimal skewing function and effects of skewness reduction respectively.

\subsection{Proposition \ref{prop:K_skew_construction}: $K$-component skew-symmetric construction}
\begin{proof}
Let $g(\theta) = K f_{K,\hat\theta}(\theta) w_{K,\hat\theta}(\theta)$. It is straightforward to see that $g(\theta) \geq 0$, and it remains to show that $\int g(\theta) \, d\theta = 1$. As each $T_{k, \hat{\theta}}$ is a reflection (affine transformation) with absolute value of the Jacobian determinant ($|S_k|$) equal to one, by change of variables,
\begin{align*}
\int g(\theta) \, d\theta &= K \int f_{K,\hat\theta}(\theta) w_{K,\hat\theta}(\theta) \, d\theta\\
&= \sum_{k=1}^K \int f_{K,\hat\theta}\{T_{k, \hat{\theta}}(\theta)\}
w_{K,\hat\theta}\{T_{k, \hat{\theta}}(\theta)\} \, d\theta  \\
&= \int f_{K,\hat\theta}(\theta)
\sum_{k=1}^K w_{K,\hat\theta}\{T_{k, \hat{\theta}}(\theta)\} \, d\theta \\
&= \int f_{K,\hat\theta}(\theta) \, d\theta=1.
\end{align*}
The third equality uses the $\T_{\G_m, \hat{\theta}}$-invariance of $f_{K,\hat\theta}$, where $f_{K,\hat\theta} \{T_{k, \hat{\theta}}(\theta)\} = f_{K,\hat\theta}(\theta)$ $\forall$ $\theta \in \mathbbm{R}^d$ and $k=1,\dots,K$. The last equality uses the property of $w_{K,\hat\theta}$ as a $K$-component skewing function, where $\sum_{k=1}^K w_{K,\hat\theta}\{T_{k, \hat{\theta}}(\theta)\}=1$. Hence $g(\theta)$ is a pdf.
\end{proof}

\subsection{Proposition \ref{prop:K_skew_representation}: $K$-component skew-symmetric representation}
\begin{proof}
First, we show that $f_{K,\hat\theta} = \frac{1}{K}\sum_{k=1}^K g \{T_{k, \hat{\theta}}(\theta)\}$ is a $\T_{\G_m, \hat{\theta}}$-invariant pdf. As each $T_{k, \hat{\theta}}$ is a reflection (affine transformation) with absolute value of the Jacobian determinant ($|S_k|$) equal to one, by change of variables,
\begin{align*}
\int f_{K,\hat\theta} (\theta) \, d\theta 
&= \frac{1}{K}\sum_{k=1}^K \int g \{T_{k, \hat{\theta}}(\theta)\} d\theta 
= \frac{1}{K}\sum_{k=1}^K \int g(\theta)d\theta 
= 1.
\end{align*}
Hence $f_{K,\hat\theta}$ is a pdf. Note that $T_{k, \hat{\theta}}(T_{j, \hat{\theta}}(\theta)) = T_{k, \hat{\theta}}\{\hat{\theta} + S_j(\theta - \hat{\theta})\} = \hat{\theta} + S_k S_j(\theta - \hat{\theta})$ and $S_k S_j \in \G_m$ since $\G_m$ is a finite abelian group closed under matrix multiplication. Hence $T_{k, \hat{\theta}} \circ T_{j, \hat{\theta}}$ is a reflection induced by $\G_m$ about $\hat{\theta}$ for any $j,k=1, \dots, K$. Thus for any fixed $j=1, \dots, K$, $\{T_{k, \hat{\theta}} \circ T_{j, \hat{\theta}} \mid k=1,\dots,K\} = \{T_{k, \hat{\theta}} \mid k=1,\dots,K\}$. It follows that
\begin{align*}
f_{K,\hat\theta}\{T_{j, \hat{\theta}}(\theta)\}
&= \frac{1}{K}\sum_{k=1}^Kg\{T_{k, \hat{\theta}}(T_{j, \hat{\theta}}(\theta))\} \\
&= \frac{1}{K}\sum_{k=1}^Kg\{T_{k, \hat{\theta}}(\theta)\}=
f_{K,\hat\theta}(\theta)
\end{align*}
for any $j=1, \dots, K$ and $\theta \in \mathbbm{R}^d$. Thus $f_{K,\hat\theta}$ is a $\T_{\G_m, \hat{\theta}}$-invariant pdf.

Next, we show that $w_{K,\hat\theta}(\theta) = {g(\theta)}/{\sum_{k=1}^K g\{T_{k, \hat{\theta}}(\theta)\}}$ is a $K$-component skewing function, with $w_{K,\hat\theta}(\theta)=1/K$ if $\sum_{k=1}^K g\{T_{k, \hat{\theta}}(\theta)\}=0$. First, if $\sum_{k=1}^K g\{T_{k, \hat{\theta}}(\theta)\} > 0$,
\begin{align*}
\sum_{j=1}^K w_{K,\hat\theta}(T_{j, \hat{\theta}}(\theta)) 
&= \sum_{j=1}^K \frac{g(T_{j, \hat{\theta}}(\theta))}{\sum_{k=1}^K g(T_{k, \hat{\theta}}(T_{j, \hat{\theta}}(\theta)))}  \\
&= \sum_{j=1}^K \frac{g(T_{j, \hat{\theta}}(\theta))}{\sum_{k=1}^K g(T_{k, \hat{\theta}}(\theta))} \\
&=  \frac{\sum_{j=1}^K g(T_{j, \hat{\theta}}(\theta))}{\sum_{k=1}^K g(T_{k, \hat{\theta}}(\theta))} = 1.
\end{align*}
If $\sum_{k=1}^K g\{T_{k, \hat{\theta}}(\theta)\}=0$, then $\sum_{k=1}^K g\{T_{k, \hat{\theta}}(T_{j, \hat{\theta}}(\theta))\} = \sum_{k=1}^K g\{T_{k, \hat{\theta}}(\theta)\} = 0$ for any $j=1, \dots, K$. Hence $\sum_{j=1}^K w_{K,\hat\theta}(T_{j, \hat{\theta}}(\theta)) = \sum_{j=1}^K 1/K = 1$ for any $j=1, \dots, K$. Therefore, $w_{K,\hat\theta}(\theta)$ is a $K$-component skewing function.

For any pdf $g: \mathbb R^d \rightarrow \mathbb R_+$, if  $\sum_{k=1}^K g\{T_{k, \hat{\theta}}(\theta)\} > 0$, we can write 
\[
g(\theta) = \left[\frac{1}{K}\sum_{k=1}^Kg\{T_{k, \hat{\theta}}(\theta)\}\right]
\frac{K g(\theta)}{\sum_{k=1}^Kg\{T_{k, \hat{\theta}}(\theta)\}}
= Kf_{K,\hat\theta}(\theta)w_{K,\hat\theta}(\theta),
\]
If $\sum_{k=1}^K g\{T_{k, \hat{\theta}}(\theta)\} = 0$, then $w_{K,\hat\theta}(\theta)=1/K$ while $f_{K,\hat\theta} = \frac{1}{K}\sum_{k=1}^K g \{T_{k, \hat{\theta}}(\theta)\} = 0$, which implies $Kf_{K,\hat\theta}(\theta)w_{K,\hat\theta}(\theta) = 0$. On the other hand, $\sum_{k=1}^K g\{T_{k, \hat{\theta}}(\theta)\} = 0$ implies $g\{T_{k, \hat{\theta}}(\theta)\} = 0$ $\forall$ $k=1, \dots, K$. Among $\{T_{k, \hat{\theta}} \mid k=1,\dots,K\}$, there exists an identity map $T_r$ corresponding to the identity map $S_r$ in $\G_m$ such that $T_r(\theta) = \hat{\theta} + S_r(\theta - \hat{\theta}) = \theta$. Thus $g\{T_r(\theta)\} = g(\theta) = 0$. Thus both sides are zero, and the representation also holds when $\sum_{k=1}^K g\{T_{k, \hat{\theta}}(\theta)\} = 0$. The above argument also shows that $\sum_{k=1}^K g\{T_{k, \hat{\theta}}(\theta)\} > 0$ on the support of $g$.

To show uniqueness on the support of $g$, suppose there exist another $\T_{\G_m, \hat{\theta}}$-invariant pdf $\tilde{f}$ and $K$-component skewing function $\tilde{w}$ such that
\[
g(\theta)=K \tilde{f}(\theta)\tilde{w}(\theta).
\]
Averaging over $T_{k, \hat{\theta}}(\theta)$ for $k=1,\dots,K$, gives
\begin{align*}
f_{K, \hat{\theta}}(\theta) = \frac{1}{K} \sum_{k=1}^K g\{T_{k, \hat{\theta}}(\theta)\}
&= \sum_{k=1}^K \tilde{f}(T_{k, \hat{\theta}}(\theta)) \tilde{w}(T_{k, \hat{\theta}}(\theta)) \\
&= \tilde{f}(\theta) \sum_{k=1}^K \tilde{w}(T_{k, \hat{\theta}}(\theta)) = \tilde{f}(\theta).
\end{align*}
In addition, $\sum_{k=1}^K g\{T_{k, \hat{\theta}}(\theta)\} > 0$ on the support of $g$. Thus, we obtain $\tilde{w}(\theta) = g(\theta)/ \{K\tilde{f}(\theta)\} = w_{K, \hat{\theta}}(\theta)$, and the representation is unique on the support of $g$.
\end{proof}

\subsection{Proposition \ref{prop: K_skew2symm}: Stochastic representations}
\begin{proof}
For (i), let $f(\theta)$ denote the pdf of $\theta=T_{J, \hat{\theta}}(x)$. Note that $T_{k, \hat{\theta}}^{-1}=T_{k, \hat{\theta}}$ and $|dT_{k, \hat{\theta}}(\theta)/d\theta| = 1$. By the law of total probability,
\begin{align*}
f(\theta) &= \sum_{k=1}^K f_{K,\hat\theta}\{T_{k, \hat{\theta}}(\theta)\}
\Pr(J=k\mid x=T_{k, \hat{\theta}}(\theta))
\left|\frac{dT_{k, \hat{\theta}}(\theta)}{d\theta}\right| 
\\
&= \sum_{k=1}^K f_{K,\hat\theta}(\theta)w_{K,\hat\theta}(\theta)
\\
&= Kf_{K,\hat\theta}(\theta)w_{K,\hat\theta}(\theta) 
= g (\theta).
\end{align*}
In the second line, we use the $\T_{\G_m, \hat{\theta}}$-invariant property of $f_{K,\hat\theta}$, where $f_{K,\hat\theta}\{T_{k, \hat{\theta}}(\theta)\}=f_{K,\hat\theta}(\theta)$ for $k=1, \dots, K$. In addition, $\Pr(J=k\mid x=T_{k, \hat{\theta}}(\theta)) = w_{K,\hat\theta}\{T_{k, \hat{\theta}}(T_{k, \hat{\theta}}(\theta))\} = w_{K,\hat\theta}(\theta)$ since $T_{k, \hat{\theta}}^{-1}=T_{k, \hat{\theta}}$. Hence $\theta\sim g$.

For (ii), let $h(x)$ denote the pdf of $x=T_{U, \hat{\theta}} (\theta)$. By the law of total probability,
\begin{align*}
h(x) &= \frac{1}{K}\sum_{k=1}^K g \{T_{k, \hat{\theta}}(x) \} \left|\frac{dT_{k, \hat{\theta}}(x)}{dx}\right|\\
&= \frac{1}{K}\sum_{k=1}^K Kf_{K,\hat\theta} \{T_{k, \hat{\theta}}(x) \} w_{K,\hat\theta} \{T_{k, \hat{\theta}}(x) \} \\
&= f_{K,\hat\theta}(x)\sum_{k=1}^K  w_{K,\hat\theta} \{T_{k, \hat{\theta}}(x) \}
= f_{K,\hat\theta}(x).
\end{align*}
In the first line, $\left|dT_{k, \hat{\theta}}(x)/dx\right| = |S_k| = 1$, while in the third and fourth lines, we used the properties that $f_{K,\hat\theta} \{T_{k, \hat{\theta}}(\theta) \} = f_{K,\hat\theta}(\theta)$ for $k=1,\dots,K$ and $\sum_{k=1}^K w_{K,\hat\theta} \{T_{k, \hat{\theta}}(\theta)\}=1$.    
\end{proof}

\subsection{Theorem \ref{thm:K_optimal_skew_KL}: Optimal skewing function}
\begin{proof}
Let $q^*_{K,\hat\theta} = K\bar q_{K,\hat\theta}(\theta)w^*_{K,\hat\theta}(\theta)$. We have
\begin{align}
\KL(\pi\|q_{K,\hat\theta}) - \KL(\pi\|q^*_{K,\hat\theta})
&= \int \pi(\theta)\log \frac{q^*_{K,\hat\theta}(\theta)}{q_{K,\hat\theta}(\theta)} d\theta 
\nonumber \\
&= \int \pi(\theta)\log \frac{w^*_{K,\hat\theta}(\theta)}{w_{K,\hat\theta}(\theta)} d\theta 
\nonumber \\
&= K\int \bar\pi_{K,\hat\theta}(\theta) w^*_{K,\hat\theta}(\theta)
\log \frac{w^*_{K,\hat\theta}(\theta)}{w_{K,\hat\theta}(\theta)} d\theta 
\label{eq: prop2 pluck-in}\\
&= \sum_{k=1}^K \int \bar\pi_{K,\hat\theta}(\theta)
 w^*_{K,\hat\theta}\{T_{k, \hat{\theta}}(\theta)\}
\log \frac{w^*_{K,\hat\theta}\{T_{k, \hat{\theta}}(\theta)\}}
{w_{K,\hat\theta}\{T_{k, \hat{\theta}}(\theta)\}} d\theta
\label{eq: change var} \\
&=  \int \bar\pi_{K,\hat\theta}(\theta)
\left\{ \sum_{k=1}^K w^*_{K,\hat\theta}\{T_{k, \hat{\theta}}(\theta)\}
\log \frac{w^*_{K,\hat\theta}\{T_{k, \hat{\theta}}(\theta)\}}
{w_{K,\hat\theta}\{T_{k, \hat{\theta}}(\theta)\}} \right\}  d\theta \geq 0.
\label{eq: KLD betw skewing fns}
\end{align}
In \eqref{eq: prop2 pluck-in}, we substitute the unique representation, $\pi(\theta)=K\bar\pi_{K,\hat\theta}(\theta)w^*_{K,\hat\theta}(\theta)$ from Proposition \ref{prop:K_skew_representation}, where $\bar\pi_{K,\hat\theta}(\theta) = \frac{1}{K}\sum_{k=1}^K \pi \{T_{k, \hat{\theta}}(\theta)\}$ is a $\T_{\G_m, \hat{\theta}}$-invariant pdf. In \eqref{eq: change var}, we make the change in variable $\theta \rightarrow T_{k, \hat{\theta}}(\theta)$ for each of the $K$ integrals for $k=1,\dots,K$. Note that $T_{j, \hat{\theta}}^{-1}=T_{j, \hat{\theta}}$ and the Jacobian is one. In addition, as $\bar\pi_{K,\hat\theta}$ is $\T_{\G_m, \hat{\theta}}$-invariant, $\bar\pi_{K,\hat\theta}(T_{k, \hat{\theta}}(\theta)) = \bar\pi_{K,\hat\theta}(\theta)$ for $k=1, \dots, K$. Finally, in \eqref{eq: KLD betw skewing fns}, the term in curly brackets is the KLD between $(w^*_{K,\hat\theta}\{T_{1, \hat{\theta}}(\theta)\},\dots,w^*_{K,\hat\theta}\{T_{k, \hat{\theta}}(\theta)\})^\top$ and 
$(w_{K,\hat\theta}\{T_{1, \hat{\theta}}(\theta)\},\dots,w_{K,\hat\theta}\{T_{k, \hat{\theta}}(\theta)\} )^\top$ for each $\theta$. Each of these vectors can be considered as a probability vector as their entries sum to one due to the property of $K$-component skewing functions. Moreover, $\bar\pi_{K,\hat\theta}(\theta)\ge 0$, and hence 
\[
\KL(\pi\|q^*_{K,\hat\theta}) \le \KL(\pi\|q_{K,\hat\theta}).
\]
Equality holds if and only if the KLD in \eqref{eq: KLD betw skewing fns} is zero almost everywhere, that is, if and only if $w_{K,\hat\theta}\{T_{k, \hat{\theta}}(\theta)\} = w^*_{K,\hat\theta}\{T_{k, \hat{\theta}}(\theta)\}$ for $k=1,\dots,K$ almost everywhere.
\end{proof}

\subsection{Theorem \ref{thm: log_joint_density_skew_K}: Effects of skewness reduction}
\begin{proof} 
For (i), $p(\tilde{b} \mid \theta_g, y) = p_{b \mid \theta_g, y} (\hat{b} + L^{-\top} \tilde{b}) / |L|$ from step 1 of Procedure \ref{Algs:transformation_K}. In addition, $T_{k,c}=T_{k,c}^{-1}$ and $|T_{k,c}| = 1$ for each $k=1,\dots,K$. From step 2 of Procedure \ref{Algs:transformation_K},
\begin{align*}
p(b^* \mid \theta_g, y) 
= \frac{1}{K}\sum_{k=1}^K p_{\tilde{b} \mid \theta_g, y}\{T_{k,c}(b^*)\}= \frac{1}{K|L|}\sum_{k=1}^K p_{b \mid \theta_g, y} [\hat{b} + L^{-\top}\{c+S_k(b^*-c)\} ].
\end{align*}

For (ii), $p(y, \theta^*) = p(b^*, y \mid \theta_g) p(\theta_g)$. Applying the same argument as in (i) gives
\[
p(b^*,y\mid \theta_g)
=\frac{1}{K|L|}\sum_{k=1}^Kp_{b,y\mid \theta_g} [\hat b+L^{-\top}\{c+S_k(b^*-c)\},y ].
\]
Multiplying by $p(\theta_g)$ and taking logarithms gives \eqref{eq:LSC_RVB_density_K}.

For (iii), since $\KL\left\{\phi(\cdot\mid c,I)\| p_{\G}(\cdot\mid \theta_g,y)\right\} = \E_{\phi(z \mid c, I)}\{ \log \phi(z \mid c, I) - \log p_{\G}(z \mid \theta_g,y) \}$, it suffices to show that $\E_{\phi(z \mid c, I)} \log p_{\G}(z \mid \theta_g,y) \geq \E_{\phi(z \mid c, I)}\log p_{\H}(z\mid \theta_g,y)$. Let $u = z - c$ and $f(u) = p_{b\mid \theta_g,y} \{\hat b + L^{-\top}(c + u) \}$ so that $p_{\G}(z\mid \theta_g,y) = \frac{1}{|\G||L|}\sum_{S\in\G} f(Su)$. Since $\H \le \G$, $\G$ can be written as a disjoint union of the right cosets of $\H$ such that $\G=\bigsqcup_{r\in R} \H r$, where $R$ is a set of right coset representatives \citep{Gallian2025}. By Lagrange's theorem,
$|\G|=|\H\backslash \G||\H|=|R||\H|$.
Hence
\begin{align*}
p_{\G}(c+u\mid \theta_g,y)
&=\frac{1}{|L||\G|}\sum_{S\in \G} f(Su) \\
&=\frac{1}{|L||\G|}\sum_{r\in R}\sum_{H\in \H} f(Hru) \\
&=\frac{1}{|R|}\sum_{r\in R} \left[\frac{1}{|L||\H|}\sum_{H\in \H} f\left\{H(ru)\right\} \right] \\
&=\frac{1}{|R|}\sum_{r\in R} p_{\H}(c+ru\mid \theta_g,y).
\end{align*}
As $z\sim \N(c,I)$, $u = z - c \sim N(0, \I)$. Moreover, each $r\in R$ is orthogonal, so $ru \sim \N(0, I)$. By Jensen's inequality,
\begin{align*}
\E_{\phi(z \mid c, I)}\log p_{\G}(z\mid \theta_g,y)
&=\E_{\phi(u)}\log\left\{ \frac{1}{|R|}\sum_{r\in R} p_{\H}(c+ru\mid \theta_g,y) \right\}\\
&\ge\frac{1}{|R|}\sum_{r\in R}\E_{\phi(u)}\log p_{\H}(c+ru\mid \theta_g,y) \\
&=\frac{1}{|R|}\sum_{r\in R}\E_{\phi(u)}\log p_{\H}(c+u\mid \theta_g,y) \\
&=\E_{\phi(z \mid c, I)}\log p_{\H}(z\mid \theta_g,y).
\end{align*}
The inequality is strict unless $p_{\H}(c+ru\mid \theta_g,y)$ is constant $\forall$ $r\in R$. In this case, for any $G \in \G$, 
\begin{align*}
p_{\H}\{T_{G,c}(z) \mid \theta_g, y\} 
&= \frac{1}{|L||\H|}\sum_{H\in \H} f\{ H(c + G(z-c) -c)\} \\
&= \frac{1}{|L||\H|}\sum_{H\in \H} f(HG(z-c)) \\
&= \frac{1}{|L||\H|}\sum_{H\in \H} f(Hr_G(z-c)) \\
&= \frac{1}{|L||\H|}\sum_{H\in \H} f(Hr_0(z-c)) \\
&= \frac{1}{|L||\H|}\sum_{H\in \H} f\{H(z-c)\} = p_{\H}(z \mid \theta_g, y).
\end{align*}
The third line holds because for any $G \in \G$, there exists $r_G \in R$ such that $\H r_G = \H G$. In the special case of the identity element $I \in \G$, there also exist $r_0$ in $R$ such that $\H r_0 = \H$, leading to the 5th equality. Hence equality holds only if $p_{\H}(\cdot\mid \theta_g,y)$ is $\T_{\G, c}$-invariant.

For (iv), taking $\H=\{I\}$ (the trivial group), $p_\H(z \mid \theta_g, y) = \frac{1}{|L|} p_{b \mid \theta_g, y}(\hat{b} + L^{-\top} z)$ which is the pdf of $\tilde{b} \mid \theta_g, y$. Taking $\G=\G_m$ yields the pdf of $b^* \mid \theta_g, y$ in Theorem \ref{thm: log_joint_density_skew_K}(i). Hence, we obtain the KLD inequality in (iv). The equality occurs when $p(\tilde{b} \mid \theta_g, y)$ is $\T_{\G_m, c}$-invariant. 
\end{proof}

\section{Posterior draws of original local variables}
In KNorm-RVB, the original local variables $b$ are mapped deterministically to $\tilde{b}$ via normalization and then stochastically to $b^*$ via skewness reduction. VI is performed on the reparametrized model in terms of $b^*$. For downstream inference, it may be desirable to obtain posterior draws of $b$ from the fitted variational approximation. Procedure \ref{Algs:forward_transformation_K} generates such draws by sampling $\theta^* \sim q(\theta^*)$ and then applying reverse mappings, using Proposition \ref{prop: K_skew2symm}(i) and Theorem \ref{thm:K_optimal_skew_KL}.

Given some blockwise reflection group $\G_m$ and reflection point $c$, Procedure \ref{Algs:transformation_K} treats $\tilde{b}$ as a draw from the $K$-component skew-symmetric density representation of $p_{\tilde b,y\mid \theta_g} (\tilde{b})$ based on Proposition \ref{prop:K_skew_representation}, that is 
\begin{align} \label{eq: kskew approx tilde b}
p_{\tilde b,y\mid \theta_g} (\tilde{b}) = K f_{K,c}(\tilde{b}) w_{K, c}(\tilde{b}),
\end{align}
\begin{align} \label{eq: symm & skew of local cond}
f_{K,c}(\tilde{b}) = \frac{1}{K} \sum_{k=1}^K p_{\tilde b\mid y,  \theta_g} \{T_{k,c}(\tilde{b})\}, \;
w_{K,c}(\tilde{b}) = \frac{p_{\tilde b\mid y, \theta_g} (\tilde{b})}{\sum_{k=1}^K p_{\tilde b\mid y, \theta_g} \{T_{k,c}(\tilde{b})\}} 
= \frac{p_{\tilde b,y \mid \theta_g} (\tilde{b}, y)}{\sum_{k=1}^K p_{\tilde b,y \mid \theta_g} \{T_{k,c}(\tilde{b}), y\}}.
\end{align}
Unlike the $\T_{\G_m, c}$-invariant pdf $f_{K,c}(\tilde{b})$, the $K$-component skewing function $w_{K,c}(\tilde{b})$ is tractable to evaluate because the intractable normalizing constant of $p_{\tilde b\mid y, \theta_g}$ cancels. In VI, we approximate the intractable $f_{K,c}$ by $q(b^*) = \N(\mu_{b^*}, \Sigma_{b^*})$, with $c=\mu_{b^*}$ as reflection point. As $\Sigma_{b^*}$ is a block diagonal matrix with blocks corresponding to each subject, it can be verified that $q(b^*)$ is $\T_{\G_m, c}$-invariant, provided the partitioning of $\G_m$ ensures that all observations for each subject stay within a single block. Theorem \ref{thm:K_optimal_skew_KL} then implies that the optimal $K$-component skewing function is the one given in \eqref{eq: symm & skew of local cond}.

Hence, we obtain a $K$-component skew-symmetric density approximating $p_{\tilde b,y\mid \theta_g} (\tilde{b})$ by replacing the $\T_{\G_m, c}$-invariant pdf $f_{K,c}(\tilde{b})$ in \eqref{eq: kskew approx tilde b} with $q(b^*)$, and a draw from this density then follows from Proposition \ref{prop: K_skew2symm}(i). Finally, we recover $b$ from $\tilde{b}$ by inverting the normalization. This procedure is summarized in Procedure \ref{Algs:forward_transformation_K} for latent Gaussian models and can be extended easily to latent non-Gaussian models by conditioning additionally on the mixing variables and replacing $\hat{b}$, $L$ and $Q$ by $\hat{b}_\NG$, $L_\NG$ and $Q_\NG$.

\begin{algorithm}[htb!]
\captionsetup{name=Procedure}
\caption{Generate posterior draws of original local variables $b$ from $q(\theta^*)$}
\label{Algs:forward_transformation_K}
\begin{algorithmic}[1]
\State Draw $\theta^* = ({b^*}^\top, \theta_g^\top)^\top \sim q(\theta^*)$.
\State For $k=1,\dots,K$, compute 
\[
\Pr(S_*=S_k \mid b^*, \theta_g, y) = w_{K,c}\{T_{k,c}(b^*)\} = \dfrac{p_{\tilde b,y \mid \theta_g} \{T_{k,c}(b^*), y\}}{\sum_{k=1}^K p_{\tilde b,y \mid \theta_g} \{T_{k,c}(b^*), y\}}.
\]
Sample $S_*$ from $\G_m=\{S_1,\dots,S_K\}$ according to above probabilities and set $\tilde b=T_{*,c}(b^*)=c+S_*(b^*-c)$, where $c=\mu_{b^*}$.
\State Invert the normalization: $b=\hat b+L^{-\top}\tilde b$, where $\hat b$ is mode of $p(b\mid\theta_g,y)$ and $L$ is Cholesky factor of $Q+H(\hat b)$.
\end{algorithmic}
\end{algorithm}

\section{Proofs of Section \ref{sec:theory}}
In this section, we present the proofs in Section \ref{sec:theory}.

\subsection{Proof of Theorem \ref{thm:reflection_symmetry}}
\begin{proof}
By Theorem \ref{thm: log_joint_density_skew_K}(i), $p(b^*\mid \theta_g,y)=\frac{1}{K|L|}\sum_{k=1}^Kp_{b \mid \theta_g,y} [\hat b+L^{-\top}\{c+S_k(b^*-c)\}]$.
Hence,
\[
p\{T_{S,c}(b^*)\mid \theta_g,y\} = \frac{1}{K|L|}\sum_{k=1}^Kp_{b \mid \theta_g,y} [\hat b + L^{-\top}\{c+S_kS(b^*-c)\} ]
= p(b^*\mid \theta_g,y), \quad \forall \, S\in\G_m,
\]
because $\{S_kS \mid k=1,\dots,K\} = \G_m$ as $\G_m$ is a group. Hence $p(b^*\mid \theta_g,y)$ is $\T_{\G_m, c}$-invariant about $c$. Since $\G_m$ contains $-I$ for $m \geq 1$, which corresponds to $\varepsilon_1=\cdots=\varepsilon_m=-1$. Taking $S=-I$ gives
\[
p(2c-b^*\mid\theta_g,y) = p(b^*\mid\theta_g,y), 
\]
which implies $p(c - b^* \mid\theta_g,y) = p(c + b^* \mid\theta_g,y)$. Hence $p(b^*\mid\theta_g,y)$ is even-symmetric about $c$. If $\E(\|b^*-c\|\mid\theta_g,y)<\infty$, then using $u=b^*-c$ and a change of variables,
\begin{align} \label{eq: even symm zero mean}
\E(b^*-c \mid \theta_g,y) &= \int u p(c+u\mid\theta_g,y) du \\
&= \int (-u) p(c-u\mid\theta_g,y) du \\
&= -\int u p(c+u\mid\theta_g,y) du = - \E(b^*-c \mid \theta_g,y)
\end{align}
implies $\E(b^*\mid\theta_g,y)=c$. From the Law of total expectation, $\E(b^* \mid y) = \E\{ \E(b^*\mid\theta_g,y) \} = c$ and
\begin{align*}
\Cov(b^*-c,\theta_g \mid y) &= \E((b^*-c) \theta_g^\top \mid y) - 0 \\
&= \E[ \E\{(b^*-c) \theta_g^\top \mid \theta_g, y\} \mid y ] \\
&= \E[ \E\{(b^*-c)  \mid \theta_g, y\}\theta_g^\top \mid y ] 
= 0.
\end{align*}
Thus $\Cov(b^*,\theta_g \mid y) = 0$.
\end{proof}

\subsection{Proof of Theorem \ref{thm:mean_recovery} (Exact recovery of mean)}
\begin{proof}
Since $q_{\nu,\lambda}(\theta^*) = q_{\lambda_{b^*}}(b^* - \nu) q_{\lambda_g}(\theta_g)$, 
\begin{align*}
K_\lambda(\nu) &= \KL\{q_{\nu,\lambda}(\theta^*)\|p(\theta^*\mid y)\} \\
&= \int q_{\nu,\lambda}(\theta^*) \log \frac{q_{\nu,\lambda}(\theta^*)}{p(\theta^*\mid y)} d\theta^* \\
&= \int q_{\lambda_{b^*}}(b^* - \nu) q_{\lambda_g}(\theta_g)\log \frac{q_{\lambda_{b^*}}(b^* - \nu)q_{\lambda_g}(\theta_g)}{p(b^* \mid \theta_g, y) p(\theta_g \mid y)} db^*d\theta_g \\
&= \int q_{\lambda_{b^*}}(\zeta) q_{\lambda_g}(\theta_g)\log \frac{q_{\lambda_{b^*}}(\zeta)q_{\lambda_g}(\theta_g)}{p_{b^*\mid\theta_g,y} (\nu + \zeta) p(\theta_g \mid y)} d\zeta d\theta_g \\
&= - \int q_{\lambda_g}(\theta_g) \left( \int q_{\lambda_{b^*}}(\zeta) \log p_{b^*\mid\theta_g,y} (\nu + \zeta) d\zeta \right) d\theta_g+ C(\lambda), 
\end{align*}
where a change of variable $\zeta = b^* - \nu$ is made in the 4th line and $C(\lambda)$ is independent of the location parameter $\nu$. Differentiating with respect to $\nu$ gives
\begin{align} \label{eq: KL grad wrt nu}
\nabla_\nu K_\lambda(\nu) = - \int q_{\lambda_g}(\theta_g) \left( \int q_{\lambda_{b^*}}(\zeta) \nabla_{b^*} \log p_{b^*\mid\theta_g,y} (\nu + \zeta) d\zeta \right) d\theta_g.
\end{align}
By Theorem \ref{thm:reflection_symmetry}, $p(b^*\mid\theta_g,y)$ is even-symmetric about $c$, that is $p_{b^*\mid\theta_g,y} (c + \zeta) = p_{b^*\mid\theta_g,y} (c - \zeta)$ $\forall \zeta \in \mathbb{R}^N$. It follows that $\nabla_{b^*}\log p(c + \zeta \mid \theta_g, y)= - \nabla_{b^*}\log p(c - \zeta \mid \theta_g, y)$. Therefore $\nabla_{b^*}\log p(c + \zeta \mid \theta_g, y)$ is odd-symmetric about $c$ and it is also an odd function in $\zeta$. At $\nu=c$, the inner integrand in \eqref{eq: KL grad wrt nu} is the product of an odd function of $\zeta$ and the even pdf $q_{\lambda_{b^*}}(\zeta)$. Hence the inner integral over $\zeta \in \mathbb R^N$ vanishes, and $\nabla_\nu K_\lambda(c)=0$. Thus $\nu=c$ is a stationary point for any given $\lambda$.

It remains to show that the stationary point is unique for any given $\lambda$. Define 
\begin{align} \label{eq: S_lambda(nu)}
S_\lambda(\nu) = \int q_{\lambda_g}(\theta_g)  K_{\theta_g, \lambda_{b^*}}(\nu) d\theta_g,
\quad 
K_{\theta_g, \lambda_{b^*}}(\nu) = - \int q_{\lambda_{b^*}}(\zeta)  \log p_{b^*\mid\theta_g,y}(\nu + \zeta) d\zeta, 
\end{align}
and $A = \{\theta_g \mid p(\cdot \mid \theta_g, y) \text{ is strictly concave on a nonempty open set}\}$, where $P(A) > 0$. From Lemma 13 of \cite{Margossian2025a}, $K_{\theta_g, \lambda_{b^*}}(\nu)$ is strictly convex on $\mathbbm{R}^N$ for $\theta_g \in A$. In addition, $K_{\theta_g, \lambda_{b^*}}(\nu)$ is convex on $\mathbbm{R}^N$ for $\theta_g \in A^c$. This is because the concavity of $\log p(b^*\mid\theta_g,y)$ implies that $-\log p_{b^*\mid\theta_g,y}(\nu+\zeta)$ is convex in $\nu$ given $\zeta$. Applying the definition of convexity to  $-\log p_{b^*\mid\theta_g,y}(\nu+\zeta)$  in $\nu$ given $\zeta$, and integrating with respect to $q_{\lambda_{b^*}}(\zeta)$ preserves the convexity. Thus for any $\nu_1 \neq \nu_2$ in $\mathbbm{R^N}$ and $t \in (0,1)$, we have 
\begin{align*}
K_{\theta_g, \lambda_{b^*}}(t \nu_1 + (1-t) \nu_2) &< t K_{\theta_g, \lambda_{b^*}}(\nu_1) + (1-t) K_{\theta_g, \lambda_{b^*}}(\nu_2) \quad \text{for } \theta_g \in A, \\
K_{\theta_g, \lambda_{b^*}}(t \nu_1 + (1-t) \nu_2) &\leq t K_{\theta_g, \lambda_{b^*}}(\nu_1) + (1-t) K_{\theta_g, \lambda_{b^*}}(\nu_2) \quad \text{for } \theta_g \in A^c. 
\end{align*}
Therefore,
\begin{align*}
& S_\lambda(t \nu_1 + (1-t) \nu_2) = \int q_{\lambda_g}(\theta_g) K_{\theta_g, \lambda_{b^*}}(t \nu_1 + (1-t) \nu_2) d\theta_g \\
&= \int_A q_{\lambda_g}(\theta_g) K_{\theta_g, \lambda_{b^*}}(t \nu_1 + (1-t) \nu_2) d\theta_g + \int_{A^c} q_{\lambda_g}(\theta_g) K_{\theta_g, \lambda_{b^*}}(t \nu_1 + (1-t) \nu_2) d\theta_g \\
& < \int_A q_{\lambda_g}(\theta_g) \{t K_{\theta_g, \lambda_{b^*}}(\nu_1) + (1-t) K_{\theta_g, \lambda_{b^*}}(\nu_2)\} d\theta_g \\
& \quad + \int_{A^c} q_{\lambda_g}(\theta_g) \{ t K_{\theta_g, \lambda_{b^*}}(\nu_1) + (1-t) K_{\theta_g, \lambda_{b^*}}(\nu_2) \} d\theta_g \\
&= \int q_{\lambda_g}(\theta_g) \{t K_{\theta_g, \lambda_{b^*}}(\nu_1) + (1-t) K_{\theta_g, \lambda_{b^*}}(\nu_2)\} d\theta_g \\
&= t S_\lambda(\nu_1) + (1-t) S_\lambda(\nu_2).
\end{align*}
Note that the strict inequality holds on the third line because $P(A) > 0$. Hence $S_\lambda(\nu)$ is strictly convex. Since $ C(\lambda)$ is independent of $\nu$, it follows that $K_\lambda(\nu) = S_\lambda(\nu) + C(\lambda)$ is strictly convex in $\nu$ for any given $\lambda$. A strictly convex function has at most one stationary point, so the stationary point at $\nu=c$ is unique.
\end{proof}

\subsection{Proof of Theorem \ref{thm:cor_recovery} (Exact recovery of correlation matrix)}
\begin{proof}
Since $q_{\nu, S, \lambda_g}(\theta^*) = q_{\nu, S}(b^*) q_{\lambda_g}(\theta_g)$, and $\Q = \{q_{\nu, S}\}$ is the location-scale family (Definition \ref{defn: location-scale family}) with a spherically symmetric base pdf $q_0$, we can write $q_{\nu, S}(b^*) = q_0(\zeta) |S|^{-1/2}$, where $\zeta = S^{-1/2} (b^* - \nu)$. Then
\begin{align*}
& \KL\{ q_{\nu, S, \lambda_g}(\theta^*) \| p(\theta^*\mid y)\} 
= \int q_{\nu, S}(b^*) q_{\lambda_g}(\theta_g) \log \frac{q_{\nu, S}(b^*) q_{\lambda_g}(\theta_g)}{p(b^* \mid \theta_g, y) p(\theta_g \mid y)} db^* d\theta_g \\
&= \E_{q_{\lambda_g}(\theta_g)}  \KL\{q_{\nu, S}(b^*) \|  p(b^* \mid \theta_g, y) \}  + \KL\{q_{\lambda_g}(\theta_g) \|  p(\theta_g \mid y)\}.
\end{align*}
Since we are minimizing with respect to $(\nu, S)$ for a fixed $\lambda_g$, the second term is a constant and we can focus only on the first term. Making a change of variable, $\zeta = S^{-1/2} (b^* - \nu)$,
\begin{align*}
\KL\{ q_{\nu, S}(b^*) \|  p(b^* \mid \theta_g, y) \} &= \int q_0(\zeta) \log \frac{q_0(\zeta) |S|^{-1/2}}{ p(S^{1/2} \zeta + \nu \mid \theta_g, y) } d\zeta \\
&= -\H(q_0)  -\frac{1}{2} \log|S| - \int q_0(\zeta) \log p_{b^* \mid \theta_g, y}(S^{1/2} \zeta + \nu) d\zeta,
\end{align*}
where $\H(q_0) = - \int  q_0(\zeta) \log q_0(\zeta) d\zeta$ denotes the entropy. Hence it suffices to find  $(\nu, S)$ that minimizes
\begin{align*}
f_{\lambda_g} (\nu, S) = -\frac{1}{2} \log|S| - \E_{q_{\lambda_g} (\theta_g) q_0(\zeta)} \{ \log p_{b^* \mid \theta_g, y}(S^{1/2} \zeta + \nu) \}.
\end{align*}
Observe that the second term is similar to $S_\lambda(\nu)$ in \eqref{eq: S_lambda(nu)}, where the only differences are that $q_{\lambda_{b^*}}(\zeta)$ is replaced by $q_0(\zeta)$ and $\zeta$ is replaced by $S^{1/2} \zeta$ in the argument of $\log p_{b^* \mid \theta_g, y}(\cdot)$. As in Theorem \ref{thm:mean_recovery}, $q_0$ is even-symmetric about 0 as it is spherically symmetric, and $ \log p(b^*\mid\theta_g,y)$ satisfy the same conditions of being concave on $\mathbb R^N$ almost everywhere (a.e.) with respect to $q_{\lambda_g}$ and strictly concave on a nonempty open set with positive probability under $q_{\lambda_g}$. Therefore we can conclude that the second term is strictly convex with respect to $\nu$ and it has a unique stationary point at $\nu = c$ for any given $\lambda_g$ as in Theorem \ref{thm:mean_recovery}.

Setting $\nu=c$, our objective function simplifies to 
\begin{align*}
f_{\lambda_g} (c, S) = -\frac{1}{2} \log|S| - \int q_{\lambda_g} (\theta_g) \left( \int q_0(\zeta) \log p_{b^* \mid \theta_g, y}(S^{1/2} \zeta + c) d\zeta \right) d\theta_g.
\end{align*}
Let $A$ denote the set of $\theta_g$ where $p(b^*\mid\theta_g,y)$ is log-concave and elliptically symmetric about $c$ with scale matrix $M$. Then there exists a spherically symmetric pdf $p_{0, \theta_g}$ such that $p_{b^* \mid \theta_g, y}(b^*) = p_{0, \theta_g}\{M^{-1/2}(b^*-c)\}|M|^{-1/2}$ for each $\theta_g \in A$. Substituting this in $f_{\lambda_g}(c, S)$, we have 
\begin{align*}
f_{\lambda_g} (c, S) = f_{\lambda_g} (J) = -\log|J|  - \int_A q_{\lambda_g} (\theta_g) \int q_0(\zeta) \log p_{0, \theta_g} (J\zeta) d\zeta d\theta_g,
\end{align*}
where $J = M^{-1/2} S^{1/2}$ since the integral over $A^c$ is equal to zero. It suffices to minimize $f_{\lambda_g} (J)$ with respect to $J$. Note that $f_{\lambda_g} (J)$ is strictly convex in $J$ and hence any stationary point will correspond to a unique global minimizer. Following \cite{Margossian2025a}, we will show that such a stationary point occurs at $J = \gamma I$ for some $\gamma > 0$. The strict concavity property of $f_{\lambda_g} (J)$ follows from the strict concavity of $\log|J|$ and the assumptions that $p(b^* \mid \theta_g, y)$ and hence $p_{0, \theta_g}$ are log-concave for $\theta_g \in A$.

Since $p_{0, \theta_g}$ and $q_0$ are spherically symmetric, we can define the functions $g_{1, \theta_g}$ and $g_2$ such that $\log p_{0, \theta_g}(J\zeta) = g_{1, \theta_g}(\|J\zeta\|)$ and $q_0(\zeta) = g_2(\|\zeta\|)$. Then 
\begin{align*}
 f_{\lambda_g} (J) &= -\log|J| - \int_A q_{\lambda_g} (\theta_g) \int g_2(\|\zeta\|) g_{1, \theta_g}(\|J\zeta\|) d\zeta d\theta_g, \nonumber \\
 \nabla_{\vec(J)}  f_{\lambda_g} (J) &= -\vec\left\{ J^{-\top} + \int_A q_{\lambda_g} (\theta_g) \left( \int g_2(\|\zeta\|) g_{1, \theta_g}'(\|J\zeta\|) \frac{J\zeta \zeta^\top}{\|J\zeta\|} d\zeta \right) d\theta_g \right\}.
 \end{align*}
Suppose a minimizer exists at $J = \gamma I$ for some $\gamma>0$, then the gradient is zero at this point, which means
\begin{align*}
\gamma^{-1} I &= - \int_A q_{\lambda_g} (\theta_g) \left( \int g_2(\|\zeta\|) g_{1, \theta_g}'(\gamma\| \zeta\|) \frac{\zeta \zeta^\top}{\|\zeta\|} d\zeta \right) d\theta_g 
\end{align*}
\cite{Margossian2025a} show that the expression in () is a scalar multiple of the identity matrix in their Theorem 10's proof. As both sides of the equations are scalar multiples of the identity matrix, we can solve for $\gamma$ by equating their traces. As $\sum_i \zeta_i^2 = \|\zeta\|^2$, We have
\begin{align*}
N / \gamma &= - \int_A q_{\lambda_g} (\theta_g) \left( \int g_2(\|\zeta\|) g_{1, \theta_g}'(\gamma \| \zeta\|) \|\zeta\| d\zeta \right) d\theta_g \\
&= \int_A q_{\lambda_g} (\theta_g) \left(  - \frac{2\pi^{N/2}}{\Gamma(N/2)} \int g_2(r) g_{1, \theta_g}'( \gamma r)  r^N dr \right) d\theta_g, 
\end{align*}
where the last line is evaluated using spherical coordinates as in \cite{Margossian2025a}. In addition, they showed that the term in the last line in parentheses is a positive increasing function of $\gamma$. This holds for each $\theta_g \in A$, and integration with respect to $q_{\lambda_g}$ preserves this property. Hence we conclude that a unique solution $J = \gamma I$ exists, which implies $S = \gamma^2 M$. Note that $\gamma$ depends on $\lambda_g$. Since $q_0$ is spherically symmetric, $\Cov_{q_0}(\zeta)=\kappa I$ for some $\kappa>0$, so at the minimizer $\Cov_{q_{\nu, S}}(b^*)=\kappa S=\kappa\gamma^2M$ and the correlation matrix implied by $M$ is recovered exactly.
\end{proof}

\end{document}